\documentclass{amsart}

\usepackage{amsmath}
\usepackage{amssymb}
\usepackage{amsthm}

\usepackage[foot]{amsaddr}

\textwidth14.5cm
\textheight21cm
\oddsidemargin1cm

\begin{document}

\newtheorem{thrm}{Theorem}
\newtheorem{prop}[thrm]{Proposition}
\newtheorem{lem}[thrm]{Lemma}
\newtheorem{cor}[thrm]{Corollary}
\newtheorem{ex}[thrm]{Example}

\theoremstyle{remark}
\newtheorem*{rem}{Remark}

\title[A novel noncommutative KdV-type equation]{ A novel noncommutative KdV-type equation,
								 its recursion operator, and solitons. }										 
\maketitle

\quad

\centerline{\scshape Sandra Carillo$^{\rm {a,b}}$, Mauro Lo Schiavo$^{\rm {a}}$, 
				Egmont Porten$^{\rm {c,d}}$, Cornelia Schiebold$^{\rm {c,d}}$ }
\quad

{\footnotesize
 \centerline{$^{\rm a}$ Dipartimento SBAI, Sapienza Universit\`a di Roma, Rome, Italy }
 \centerline{$^{\rm b}$ I.N.F.N. - Sez. Roma1, Gr. IV - Mathematical Methods in NonLinear Physics, Rome, Italy}
 \centerline{$^{\rm c}$ DMA, Mid Sweden University, Sundsvall, Sweden}
 \centerline{$^{\rm d}$ Instytut Matematyki, Uniwersytet Jana Kochanowskiego w  Kielcach, Poland}
}

\quad

\begin{abstract}
    A noncommutative KdV-type equation is introduced extending the B\"acklund chart in \cite{Carillo:LoSchiavo:Schiebold:2016}. 
    This equation, called meta-mKdV here, is linked by Cole-Hopf transformations to the two noncommutative 
    versions of the mKdV equations listed in \cite[Theorem 3.6]{Olver:Sokolov:1998}.
    For this meta-mKdV, and its mirror counterpart, recursion operators, hierarchies and an explicit solution class are derived.
\end{abstract}

\quad

{\small
\noindent\emph{{\rm 1991} Mathematics Subject Classification.} 35Q53; 46L55; 37K35. \\[1ex]
\noindent\emph{Key words and phrases.} Noncommutive KdV-type equations; recursion operators; hereditariness, 
	B\"acklund transformations; soliton solutions.
}

\allowdisplaybreaks

\section{Introduction}

The study of integrable systems in \cite{Fuchssteiner:Carillo:1989} was based on a B\"acklund chart comprising the 
KdV and modified KdV equations, the KdV singularity manifold equation (also known as as UrKdV or Schwarz-KdV 
\cite{Weiss:1984,Wilson:1988}), the KdV interacting soliton equation \cite{Fuchssteiner:1987} and the Harry-Dym equation.
In the recent article \cite{Carillo:LoSchiavo:Schiebold:2016} a noncommutative interpretation of this chart is suggested. 
We mention two major features contrasting with the commutative model:

\quad\\[-4ex]

\indent 1.) At the place of the usual mKdV, we proceed via two different noncommutative  interpretations\footnote{
	Throughout the text,  $[A,B] = AB-BA$ and $\{ A, B \} = AB+BA$ denote commutator and anti-commutator
	of $A$ and $B$, respectively.}.
The first is the \emph{mKdV}
\begin{equation} \label{mkdv}
   V_t = V_{xxx} - 3 \{ V^2, V_x \} ,
\end{equation} 
see \cite{Athorne:Fordy:1987} for an early appearance. The second noncommutative version of the mKdV equation, 
which is abbreviated by \emph{alternative mKdV} in the sequel (\emph{amKdV} in \cite{Carillo:LoSchiavo:Schiebold:2016}), 
\begin{equation} \label{amkdv}
   \tilde V_t = \tilde V_{xxx} + 3 [ \tilde V, \tilde V_{xx} ] -6 \tilde V \tilde V_x \tilde V,
\end{equation} 
was first described by Khalilov and Khruslov \cite{Khalilov:Khruslov:1990}.

\quad\\[-4ex]

\indent 2.) The noncommutative B\"acklund chain in \cite{Carillo:LoSchiavo:Schiebold:2016}  does not include the Harry-Dym 
equation. The difficulty is that the interpretation of the reciprocal transformation between the KdV interacting soliton and the 
Harry-Dym equation\footnote{
	This transformation is an extended hodograph transformation, which intertwines dependent and independent variables.} 
requires substantially new ideas.

\quad\\[-4ex]

\indent
The relation between \eqref{mkdv} and \eqref{amkdv}, as presented in \cite{Carillo:LoSchiavo:Schiebold:2016}, 
builds on work by Liu and Athorne \cite{Athorne:Liu:1991}, who showed that the associated scattering problems 
are related by a gauge transformation. 
The fundamental observation of the present work is that, on the level of evolution equations, this correspondence 
can be made more explicit by introducing an intermediate equation
\begin{equation} \label{meta mkdv}
   Q_t = Q_{xxx} - 3 Q_{xx}Q^{-1}Q_{x},
\end{equation}
which we call \emph{meta-mKdV equation}, to emphasize that it is somewhat \emph{hidden} behind \eqref{mkdv} and \eqref{amkdv}. 
Analogously we obtain the mirror version\footnote{
	The term \emph{mirror} has been used in a comparable way for the noncommutative Burgers equations 
	in \cite{Kupershmidt:2005}, see also \cite{Carillo:LoSchiavo:Schiebold:arXiv2016,Carillo:Schiebold:JNMP2012}.
} 
of \eqref{meta mkdv} by reversing the order of multiplication in the nonlinear term.  
Those are linked to \eqref{mkdv}, \eqref{amkdv} by various Cole-Hopf transformations.
Including also the mirror meta-mKdV, we obtain the symmetric picture in Figure \ref{fig meta}.

The extended B\"acklund chart is then used in the further study of the novel equations. More precisely, recursion operators 
are derived, and it is explained why these operators are hereditary. Note that hereditariness is much harder to verify 
directly in the noncommutative setting \cite{Schiebold:2011}. This leads to hierarchies of commuting symmetries 
with the Cole-Hopf links extending to each level of the respective hierarchies. 

Moreover, we construct an explicit solution class of the meta-mKdV. Applying the Cole-Hopf link towards the mKdV, one recovers 
solutions which are already obtained in \cite{Carillo:Schiebold:2009} and can be interpreted as noncommutative analogs 
of 1-soliton solutions. In contrast, the solutions derived by applying the Cole-Hopf link towards the alternative mKdV are new to the 
best of the authors' knowledge. The fact that  the gauge transformation in \cite{Athorne:Liu:1991}  becomes completely explicit
for these solution classes is used to extend the construction to the entire hierarchies.

\quad

The article is organised as follows. In Section \ref{section meta} it is shown that two different noncommutative interpretations 
of the Cole-Hopf transformation map solutions of the meta-mKdV to solutions of the noncommutative mKdV and alternative mKdV,  
respectively. 
In Section \ref{section comparison} we make the comparison with the commutative case, where the picture is slightly different 
since the two noncommutative mKdV's specialise to one equation, the standard mKdV. 
Building on the present work, the commutative case was further elaborated in \cite{Carillo:2017}.              
It is interesting that the commutative meta-mKdV was also derived in the recent \cite{Horwath:Guengoer:2016}, 
from a completely different approach.
 
In Section \ref{section soliton} we give an explicit solution of the nc meta-mKdV, considered as an equation 
with values in some Banach algebra. As a corollary we obtain solutions of the nc mKdV's \eqref{mkdv} and \eqref{amkdv}, 
which can be understood as algebra-valued analogues of the familiar 1-soliton solutions. 
In Section \ref{section recursion operator} we use the Cole-Hopf link between \eqref{mkdv} and \eqref{amkdv} together 
with a structure theorem of Fokas and Fuchssteiner to derive a recursion operator for the nc meta-mKdV. 
In a way comparable to earlier work in \cite{Carillo:Schiebold:2009}, we then treat the induced hierarchy in 
Section \ref{section hierarchy} and extend the solutions from Section \ref{section soliton} to all members of the hierarchy. 

In Appendix \ref{section proof} we give the computationally involved proof of Theorem \ref{link from meta}.
Appendix \ref{section terminology} is a concise introduction to general methods for recursion operators and B\"acklund 
transformations. In Appendix \ref{section recursion operator mirror} the recursion operator of the nc mirror meta-mKdV 
is derived. Moreover, an alternative derivation of the recursion operators in Theorem \ref{rec op} is given.

\quad

\noindent{\bf Acknowledgements}

\smallskip

\noindent The financial support of G.N.F.M.-I.N.d.A.M., I.N.F.N. and {\sc Sapienza} University of Rome, Italy, are gratefully acknowledged. 
C. Schiebold wishes also to thank S.B.A.I. Dept. and {\sc Sapienza} University of Rome for the kind hospitality.

\quad

\section{A novel equation ``behind'' the two noncommutative mKdV equations} 
\label{section meta}

In this section we are concerned with the mathematical justification of Figure \ref{fig meta}. Throughout the article,
we understand evolution equations as \eqref{meta mkdv} as equations for an unknown function $Q(x,t)$ taking values
in a (possibly noncommutative) Banach algebra $\mathcal A$. If we consider B\"acklund transformations, like the two 
noncommutative interpretations of the Cole-Hopf transformation
\begin{eqnarray} 
   C(Q) &=& Q_x Q^{-1} ,		\label{cole hopf} \\
   \tilde C(Q) &=& Q^{-1} Q_x ,	\label{mirror cole hopf}
\end{eqnarray}
we will tacitly assume that the Banach algebra is the same for the involved equations.

The following is fundamental for the sequel.

\begin{thrm} \label{link from meta}
Let $Q$ be a solution of the meta-mKdV \eqref{meta mkdv}. Then
   \begin{enumerate}
      \item[a)] $V = C(Q)$ is a solution of the mKdV \eqref{mkdv}, 
      \item[b)] $\tilde V = \tilde C(Q)$ is a solution of the alternative mKdV \eqref{amkdv}.
   \end{enumerate}
\end{thrm}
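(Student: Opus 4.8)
The plan is to verify each assertion by direct substitution, organised around a zero-curvature-type identity that keeps the noncommutative bookkeeping manageable; I treat part (a) in detail, since part (b) is entirely analogous up to the order of multiplication. First, writing $V = C(Q) = Q_x Q^{-1}$ in the form $Q_x = VQ$ and differentiating repeatedly with the noncommutative Leibniz rule, one expresses the higher $x$-derivatives of $Q$ as left multiples of $Q$:
\begin{align*}
   Q_{xx} &= (V_x + V^2)Q, \\
   Q_{xxx} &= (V_{xx} + 2V_x V + V V_x + V^3)Q.
\end{align*}
Substituting these, together with $Q_{xx}Q^{-1}Q_x = (V_x V + V^3)Q$, into the meta-mKdV \eqref{meta mkdv} yields $Q_t = WQ$, i.e. $Q_t Q^{-1} = W$, where
\begin{equation*}
   W = V_{xx} + [V, V_x] - 2V^3.
\end{equation*}

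The key step is then to avoid differentiating $Q^{-1}$ directly and instead to exploit the compatibility of the two flows. Writing $Q_x = VQ$ and $Q_t = WQ$ and equating the mixed derivatives $Q_{xt} = Q_{tx}$ gives the zero-curvature-type relation
\begin{equation*}
   V_t = W_x + [W, V].
\end{equation*}
It remains to insert the explicit $W$ and expand. A careful but routine expansion of $W_x$ and of $[W,V]$ shows that the quadratic contributions $[V,V_{xx}]$ cancel, while the cubic terms combine into $-3(V^2 V_x + V_x V^2)$, so that
\begin{equation*}
   V_t = V_{xxx} - 3\{V^2, V_x\},
\end{equation*}
which is precisely the mKdV \eqref{mkdv}. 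This proves (a).

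For part (b) I proceed identically with $\tilde V = \tilde C(Q) = Q^{-1}Q_x$, now written as $Q_x = Q\tilde V$, so that the derivatives of $Q$ appear as right multiples of $Q$. The same substitution into \eqref{meta mkdv} gives $Q^{-1}Q_t = \tilde W$ with $\tilde W = \tilde V_{xx} + 2[\tilde V, \tilde V_x] - 2\tilde V^3$, and equality of the mixed partials now produces the mirror compatibility relation $\tilde V_t = \tilde W_x + [\tilde V, \tilde W]$, where the reversed sign of the commutator reflects the mirror order of multiplication. Expanding as before, the quadratic terms assemble into $3[\tilde V, \tilde V_{xx}]$ and the cubic terms collapse to $-6\,\tilde V \tilde V_x \tilde V$, yielding the alternative mKdV \eqref{amkdv}.

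The main obstacle is purely computational: every product must be kept in its correct order, so the expansions of $W_x$, $[W,V]$ and their tilde analogues involve many monomials that must be matched exactly, with no simplification from commuting factors. The conceptual content that tames this is the compatibility identity $V_t = W_x + [W,V]$ (respectively $\tilde V_t = \tilde W_x + [\tilde V,\tilde W]$), which replaces the awkward $t$-differentiation of $Q^{-1}$ by a single commutator computation; once $W$ is known, the verification reduces to bookkeeping of noncommutative monomials of degree at most three in $V$ and its $x$-derivatives.
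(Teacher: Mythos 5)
Your proof is correct, but it follows a genuinely different route from the paper's. The paper (Appendix \ref{section proof}) argues by brute force in the variable $Q$: it computes $V_x$, $V_{xx}$, $V_{xxx}$ and $\{V^2,V_x\}$ as polynomials in $Q_x,\dots,Q_{xxxx}$ and $Q^{-1}$ using the rule for differentiating inverses, computes $V_t$ by substituting the meta-mKdV into $V_t = Q_{xt}Q^{-1} - Q_xQ^{-1}Q_tQ^{-1}$, and matches the two expansions monomial by monomial (likewise for part (b)). You work in the opposite direction, in the variable $V$: from $Q_x = VQ$ you get $Q_{xx}=(V_x+V^2)Q$ and $Q_{xxx}=(V_{xx}+2V_xV+VV_x+V^3)Q$, so the meta-mKdV becomes $Q_t = WQ$ with $W = V_{xx}+[V,V_x]-2V^3$, and the evolution of $V$ then drops out of the compatibility of mixed partials, $V_t = W_x+[W,V]$, i.e.\ a zero-curvature condition for the Lax-type pair $Q_x=VQ$, $Q_t=WQ$. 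I checked the details: your $W$, your $\tilde W = \tilde V_{xx}+2[\tilde V,\tilde V_x]-2\tilde V^3$ (the coefficient $2$, asymmetric to part (a), is right and stems from the asymmetry of $Q_{xx}Q^{-1}Q_x$), the sign flip in the mirror relation $\tilde V_t = \tilde W_x+[\tilde V,\tilde W]$ for the right-multiplication pair $Q_x = Q\tilde V$, $Q_t = Q\tilde W$, and the final cancellations (the $[V,V_{xx}]$ terms cancel and the cubics give $-3\{V^2,V_x\}$; respectively the quadratics assemble to $3[\tilde V,\tilde V_{xx}]$ and the cubics to $-6\tilde V\tilde V_x\tilde V$) are all correct. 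What your route buys: derivatives of $Q^{-1}$ never appear, every intermediate object is a polynomial in $V$ (or $\tilde V$) and its $x$-derivatives, and the cancellation structure is transparent; conceptually, it exhibits the Cole-Hopf maps as realizing \eqref{mkdv} and \eqref{amkdv} as compatibility conditions of the left and right linear problems in $Q$, which connects nicely with the gauge-transformation picture of Liu and Athorne invoked in the paper's introduction. What the paper's heavier computation buys is that it is elementary in the strictest sense: it uses only the product and inverse rules, and never appeals to the equality $Q_{xt}=Q_{tx}$ --- the one (mild, standard in this setting) smoothness hypothesis your argument requires.
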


The proof requires longer computations and is postponed to Appendix \ref{section proof}.

\quad

\begin{figure}

\unitlength0.9cm
\begin{picture}(15,6)

   \put(0,2.5){\framebox(3.6,1){\shortstack{\footnotesize mKdV \\ \footnotesize $V_t=V_{xxx} - 3\{V^2,V_x\}$}}}
   \put(10.5,2.5){\framebox(5,1){\shortstack{\footnotesize alternative mKdV \\
                         \footnotesize ${\tilde V}_t={\tilde V}_{xxx}+ 3[\tilde V,{\tilde V}_{xx}] - 6\tilde V{\tilde V}_x\tilde V$ }}}
                         
   \put(5.25,4.5){\framebox(4,1){\shortstack{\footnotesize meta-mKdV \\\footnotesize $Q_t = Q_{xxx} -3Q_{xx}Q^{-1}Q_{x}$}}}
   \put(5.25,0.5){\framebox(4,1){\shortstack{\footnotesize mirror meta-mKdV \\ 
   							\footnotesize $\tilde Q_t = \tilde Q_{xxx} -3\tilde Q_{x} \tilde Q^{-1}\tilde Q_{xx}$}}}

   \put(1.75,4.4){\footnotesize $V= Q_xQ^{-1}$}
  \put(1.75,1.25){\footnotesize $V= - \ \tilde Q^{-1}\tilde Q_x$}
   \put(7.5,3){\footnotesize $\tilde Q=Q^{-1}$}  
   \put(11,4.4){\footnotesize $\tilde V=Q^{-1}Q_x$}   
   \put(11,1.25){\footnotesize $\tilde V= - \  \tilde Q_x\tilde Q^{-1}$}

   \put(9.5,1){\vector(2,1){2.5}}   
   \put(9.5,5){\vector(2,-1){2.5}}   
   \put(5,1){\vector(-2,1){2.5}}   
   \put(5,5){\vector(-2,-1){2.5}}   

   \put(7.25,2){\vector(0,1){2.25}} \put(7.25,4){\vector(0,-1){2.25}}     

\end{picture}
\caption{The meta-mKdV and its mirror equation: connecting the two different noncommutative interpretations 
	of the mKdV equation.}
\label{fig meta}

\end{figure}
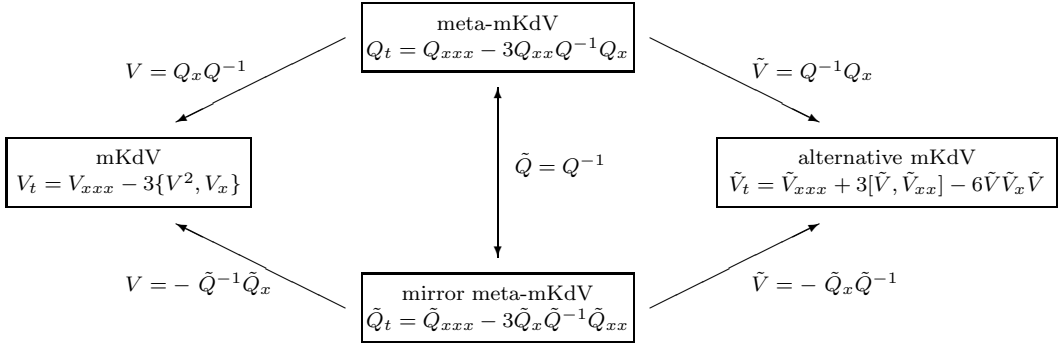

One gets a more complete picture by including the \emph{mirror meta-mKdV equation}, 
\begin{equation} \label{mirror meta mkdv}
    \tilde Q_t = \tilde Q_{xxx} - 3 \tilde Q_{x} \tilde Q^{-1} \tilde Q_{xx} ,
\end{equation}
in which the factors of the nonlinear term in \eqref{meta mkdv} appear in reversed order.

The next proposition establishes a direct link between the meta-mKdV \eqref{meta mkdv} and the mirror 
meta-mKdV \eqref{mirror meta mkdv}.

\begin{prop} \label{link between metas}
  	An invertible algebra-valued function $Q$ is a solution of \eqref{meta mkdv} if and only if $\tilde Q = Q^{-1}$ 
	is a solution of \eqref{mirror meta mkdv}.
\end{prop}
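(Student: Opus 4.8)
The plan is to verify the equivalence by direct substitution, exploiting that $Q \mapsto Q^{-1}$ is an involution. Writing $P = \tilde Q = Q^{-1}$, the entire computation rests on the noncommutative Leibniz rule for the inverse,
\[
   (Q^{-1})_x = -\,Q^{-1} Q_x Q^{-1},
\]
which has to be applied repeatedly, and crucially without ever reordering factors. I would first assume $Q$ solves \eqref{meta mkdv} and show that $P$ solves \eqref{mirror meta mkdv}; the converse I would then obtain for free from a symmetry argument.

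Differentiating the inverse formula and substituting it back into itself, one finds
\[
   P_{xx} = 2\,P Q_x P Q_x P - P Q_{xx} P,
\]
and, after one further differentiation and the same substitution,
\[
   P_{xxx} = -6\,P Q_x P Q_x P Q_x P + 3\,P Q_{xx} P Q_x P + 3\,P Q_x P Q_{xx} P - P Q_{xxx} P.
\]
For the time derivative I would use $P_t = -P Q_t P$ together with the meta-mKdV \eqref{meta mkdv} for $Q$, i.e. $Q_t = Q_{xxx} - 3 Q_{xx} P Q_x$, which yields
\[
   P_t = -\,P Q_{xxx} P + 3\,P Q_{xx} P Q_x P.
\]
It then remains to assemble the right-hand side of \eqref{mirror meta mkdv}. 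Since $P^{-1} = Q$, its nonlinear term is $-3 P_x P^{-1} P_{xx} = -3 P_x Q P_{xx}$; inserting $P_x = -P Q_x P$ and contracting the middle factor via $PQ = QP = I$, this collapses to $6\,P Q_x P Q_x P Q_x P - 3\,P Q_x P Q_{xx} P$. Adding this to the expression for $P_{xxx}$, the quartic terms cancel against each other, as does one of the two cubic terms, leaving exactly $-P Q_{xxx} P + 3\,P Q_{xx} P Q_x P$, which coincides with $P_t$. This establishes the forward implication.

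For the converse I would not recompute anything. The map $Q \mapsto Q^{-1}$ is its own inverse, and reversing the order of all products in \eqref{meta mkdv} produces precisely \eqref{mirror meta mkdv}, while the inverse formula above is a palindrome and hence invariant under this order-reversal. Applying the order-reversal symmetry to the forward implication therefore delivers the reverse implication at no extra cost, completing the \emph{if and only if}.

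The main obstacle is purely one of bookkeeping rather than of idea: in the noncommutative setting no factor may be commuted past another, so the five- and seven-factor products must be tracked individually through each differentiation, and the cancellations occur only because the surviving terms agree factor-for-factor in the correct order. The single genuinely non-mechanical point is the repeated use of $PQ = QP = I$ to contract the middle factor when substituting $P^{-1} = Q$ into the nonlinear term of the mirror equation.
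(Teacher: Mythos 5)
Your proposal is correct and follows essentially the same route as the paper: both differentiate $\tilde Q = Q^{-1}$ via the noncommutative rule $(Q^{-1})_x = -\,Q^{-1}Q_xQ^{-1}$ to get the same expressions for $\tilde Q_x$, $\tilde Q_{xx}$, $\tilde Q_{xxx}$ and $\tilde Q_t = -Q^{-1}Q_tQ^{-1}$, then substitute into the mirror equation and observe the cancellations, leaving exactly $-Q^{-1}\big(Q_{xxx}-3Q_{xx}Q^{-1}Q_x\big)Q^{-1} = \tilde Q_t$. The only (minor) difference is the converse, which the paper dispatches by noting the computation is symmetrical, whereas you make that symmetry explicit through the order-reversal (opposite-algebra) argument; this is equally valid, since your forward implication holds over an arbitrary Banach algebra and order reversal commutes with differentiation while exchanging the two equations.
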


\begin{proof} Let $Q$ be a solution of the meta-mKdV \eqref{meta mkdv}, and $\tilde Q = Q^{-1}$. Then
 	\begin{eqnarray*}
	   \tilde Q_x &=& -Q^{-1}Q_xQ^{-1} , \\
	   \tilde Q_{xx} &=&  -Q^{-1}Q_{xx}Q^{-1} +  2Q^{-1}Q_xQ^{-1}Q_xQ^{-1} \\
	   \tilde Q_{xxx} &=& -Q^{-1}Q_{xxx}Q^{-1} + 3Q^{-1}Q_xQ^{-1}Q_{xx}Q^{-1} + 3Q^{-1}Q_{xx}Q^{-1}Q_xQ^{-1} \\
	   	&&   -6 Q^{-1}Q_xQ^{-1}Q_xQ^{-1}Q_xQ^{-1}  
	\end{eqnarray*}
	This shows $Q\big( \tilde Q_{xxx}  - 3 \tilde Q_{xx}  \tilde Q^{-1}  \tilde Q_{x} \big)Q = -( Q_{xxx} - 3Q_xQ^{-1}Q_{xx})$,
	 and similarly one verifies $Q\tilde Q_t Q = - Q_t$. As a result
	 \begin{equation*} 
	 	\tilde Q_{xxx}  - 3 \tilde Q_{xx}  \tilde Q^{-1}  \tilde Q_{x} 
	 		= -Q^{-1}( Q_{xxx} - 3Q_xQ^{-1}Q_{xx})Q^{-1} = -Q^{-1}Q_tQ^{-1} =\tilde Q_t .
	\end{equation*}
	showing that $\tilde Q$ solves \eqref{mirror meta mkdv}. The proof of the converse is symmetrical.
\end{proof}

Now we can prove the counterpart to Theorem \ref{link from meta} for the mirror meta-mKdV \eqref{mirror meta mkdv}.

\begin{prop} \label{link from mirror meta}
Let $\tilde Q$ be  a solution of the mirror meta-mKdV \eqref{mirror meta mkdv}. Then
\begin{enumerate}
      \item[a)] $V = - \ \tilde C(\tilde Q)$ is a solution of the mKdV \eqref{mkdv}, 
      \item[b)] $\tilde V = -\ C(\tilde Q)$ is a solution of the alternative mKdV \eqref{amkdv}.
   \end{enumerate}
\end{prop}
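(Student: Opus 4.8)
The plan is to reduce both statements to Theorem \ref{link from meta} by exploiting the involution $\tilde Q \mapsto Q = \tilde Q^{-1}$ provided by Proposition \ref{link between metas}. First I would invoke Proposition \ref{link between metas}: since $\tilde Q$ solves the mirror meta-mKdV \eqref{mirror meta mkdv}, the function $Q = \tilde Q^{-1}$ is an invertible solution of the meta-mKdV \eqref{meta mkdv}. Theorem \ref{link from meta} then applies to $Q$ and tells us that $V = C(Q) = Q_x Q^{-1}$ solves \eqref{mkdv} and $\tilde V = \tilde C(Q) = Q^{-1} Q_x$ solves \eqref{amkdv}. The whole task is therefore to rewrite these two Cole-Hopf images in terms of $\tilde Q$ and to check that the claimed signs come out correctly.

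The key computation is the derivative of an inverse. With $Q = \tilde Q^{-1}$ one has $Q_x = -\tilde Q^{-1} \tilde Q_x \tilde Q^{-1}$ (the same rule already used in the proof of Proposition \ref{link between metas}, with the roles of $Q$ and $\tilde Q$ interchanged). For part a) I substitute this into $C(Q)$:
\begin{equation*}
   V = Q_x Q^{-1} = \big( -\tilde Q^{-1} \tilde Q_x \tilde Q^{-1} \big)\, \tilde Q = -\tilde Q^{-1} \tilde Q_x = -\tilde C(\tilde Q),
\end{equation*}
which is exactly the asserted formula, and since $V = C(Q)$ solves \eqref{mkdv} by Theorem \ref{link from meta} a), part a) follows. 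For part b) I substitute into $\tilde C(Q)$:
\begin{equation*}
   \tilde V = Q^{-1} Q_x = \tilde Q \,\big( -\tilde Q^{-1} \tilde Q_x \tilde Q^{-1} \big) = -\tilde Q_x \tilde Q^{-1} = -C(\tilde Q),
\end{equation*}
matching the claim, and since $\tilde V = \tilde C(Q)$ solves \eqref{amkdv} by Theorem \ref{link from meta} b), part b) follows. Note that the cancellation of the inner factor $\tilde Q^{-1}\tilde Q = \mathrm{id}$ in part a) (respectively the outer factor in part b)) is what collapses the naive three-factor expression to the single Cole-Hopf term, and it is precisely this cancellation that accounts for the single minus sign.

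I expect no genuine obstacle here: the argument is a short algebraic translation once Proposition \ref{link between metas} is in hand, and the only thing requiring care is the bookkeeping of left versus right multiplication by $\tilde Q$, which distinguishes $C$ from $\tilde C$ and fixes which factor cancels. The lone conceptual point worth stating explicitly is the invertibility hypothesis needed to apply Proposition \ref{link between metas}, which is already built into the standing assumption that the Cole-Hopf transformations and inverses are defined; beyond that the proof is purely mechanical and symmetric in the two parts.
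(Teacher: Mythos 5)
Your proposal is correct and follows essentially the same route as the paper: reduce to Theorem \ref{link from meta} via the involution $Q=\tilde Q^{-1}$ of Proposition \ref{link between metas}, then translate the Cole-Hopf images using the derivative-of-inverse rule. The only (immaterial) difference is the direction of the algebra—you expand $Q_x=-\tilde Q^{-1}\tilde Q_x\tilde Q^{-1}$ to rewrite $C(Q)$, $\tilde C(Q)$ in terms of $\tilde Q$, while the paper expands $\tilde Q_x=-Q^{-1}Q_xQ^{-1}$ to rewrite $-C(\tilde Q)$, $-\tilde C(\tilde Q)$ in terms of $Q$; both yield the identities $-\tilde C(\tilde Q)=C(Q)$ and $-C(\tilde Q)=\tilde C(Q)$.
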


\begin{proof} Let $\tilde Q$ be a solution of \eqref{mirror meta mkdv}, and set $Q:=\tilde Q^{-1}$. Then   
   $\tilde Q_x = -  Q^{-1} Q_x Q^{-1}$, and therefore the relations
   \begin{eqnarray*}
      - C(\tilde Q) = -\tilde Q_x \tilde Q^{-1} = - (-Q^{-1} Q_x Q^{-1})Q = Q^{-1} Q_x = C(Q), \\
      - \tilde C(\tilde Q) = -  \tilde Q^{-1} \tilde Q_x = -  \tilde Q^{-1}  (-Q^{-1} Q_x Q^{-1}) = Q_x Q^{-1} = \tilde C(Q)
   \end{eqnarray*}
   hold. Furthermore, $Q$ is a solution of \eqref{meta mkdv} by Proposition \ref{link between metas}. 
   Now Proposition \ref{link from mirror meta} follows from Theorem \ref{link from meta}.
\end{proof}

\section{The commutative versus the noncommutative B\"acklund chart} 
\label{section comparison}

Including the results from Section \ref{section meta} into the noncommutative B\"acklund chart 
from \cite{Carillo:LoSchiavo:Schiebold:2016}, we arrive at the chart depicted in Figure \ref{fig nc chart}.
This chart starts with the KdV equation
\begin{equation} \label{kdv}
   U_t = U_{xxx} + 3 \{ U, U_x \} ,
\end{equation}
and proceeds via \eqref{mkdv}, \eqref{meta mkdv} and \eqref{amkdv} -- or \eqref{mkdv}, \eqref{mirror meta mkdv} 
and \eqref{amkdv} --  to the noncommutative counterparts of the KdV interacting soliton equation
\begin{equation} 
   \label{int so} S_t = S_{xxx} - \frac{3}{2} \big( S_xS^{-1}S_x \big)_x 
\end{equation}
and of the KdV singularity manifold equation 
\begin{equation} 
   \label{urkdv} \phi_t = \phi_{x} \{ \phi; x \} ,
\end{equation}
where $ \{ \phi; x \} $ denotes the noncommutative Schwarzian derivative
\begin{equation*}
   \{ \phi ; x \} = \big( \phi_x^{-1} \phi_{xx} \big)_x - \frac{1}{2} \big( \phi_x^{-1} \phi_{xx} \big)^2 .
\end{equation*}
The connecting B\"acklund links are given by
\begin{equation*}
   (a) \quad U = -\  (V^2+V_x), \qquad
   (b) \quad \tilde V = \frac{1}{2} S^{-1} S_x, \qquad
   (c) \quad S = \phi_x .
\end{equation*}
Note that $(a)$ is the noncommutative Miura transformation; the links $(d)$, $(\tilde d)$ stand for the Cole-Hopf transformation 
\eqref{cole hopf} and its mirror \eqref{mirror cole hopf}, compare also Theorem \ref{link from meta} and 
Proposition \ref{link from mirror meta}.

\begin{figure}[h]

\unitlength1cm
\begin{picture}(15,6.75)
   
   \put(0.5,5.5){\framebox(2,0.5){\shortstack{\footnotesize KdV$(U)$}}}
   \put(4,5.5){\framebox(2.5,0.5){\shortstack{\footnotesize mKdV$(V)$}}}
   
   \put(1.3,3){\framebox(2.5,1){\shortstack{\footnotesize \\ \footnotesize meta-mKdV$(Q)$}}}
    \put(6,3){\framebox(2.5,1){\shortstack{\footnotesize mirror \\ \footnotesize meta-mKdV$(\tilde Q)$}}}

   \put(4,0.5){\framebox(2.5,1){\shortstack{\footnotesize alternative \\ \footnotesize mKdV$(\tilde V)$}}}
   \put(8,0.75){\framebox(2.5,0.5){\shortstack{\footnotesize Int So KdV $(S)$}}}
   \put(12,0.75){\framebox(2.5,0.5){\shortstack{\footnotesize KdV Sing $(\phi)$}}}


   \put(0,0){\line(1,0){15}}
   \put(0,6.5){\line(1,0){15}}


   \put(3.75,5.75){\vector(-1,0){1}}   \put(3.1,6){\footnotesize $(a)$}
  
   \put(3.5,4.25){\vector(1,1){1}}		 \put(3.4,4.75){\footnotesize $(d)$}
   \put(3.5,2.75){\vector(1,-1){1}}	 \put(3.4,2){\footnotesize $(\tilde{d})$}
   \put(6.5,4.25){\vector(-1,1){1}} 	 \put(6.2,4.75){\footnotesize $-(\tilde{d})$}
   \put(6.5,2.75){\vector(-1,-1){1}}    	 \put(6.2,2){\footnotesize $-(d)$}
   
   \put(7.75,1){\vector(-1,0){1}}        \put(7.1,1.2){\footnotesize $(b)$}
   \put(11.75,1){\vector(-1,0){1}}      \put(11.1,1.2){\footnotesize $(c)$}

\end{picture}
\caption{KdV-type equations and their B\"acklund links: the non-com\-mu\-ta\-ti\-ve case.}
\label{fig nc chart}

\end{figure}

Since in the commutative case mKdV \eqref{mkdv} and alternative mKdV \eqref{amkdv} both reduce to the usual mKdV 
equation $v_t = v_{xxx} - 6 v^2v_x$, it is instructive to spell out how the meta-mKdV 
\begin{equation}
   \label{scalar meta mkdv} q_t = q_{xxx} - 3 \frac{q_xq_{xx}}{q} 
\end{equation}
fits into the B\"acklund chart in \cite{Fuchssteiner:Carillo:1989}.

By Theorem \ref{link from meta}, the Cole-Hopf transformation $v=q_x/q$ maps solutions of \eqref{scalar meta mkdv} 
to solutions of the mKdV. Moreover one sees that $s=q^2$ transfers solutions of \eqref{scalar meta mkdv} to solutions 
of the KdV interacting soliton equation (and vice versa as long as $\sqrt{s}$ is defined).
This means that we can split the B\"acklund link $(b)$ into two parts, namely
\begin{equation*}
   (b1) \quad v = \frac{q_x}{q}, \qquad
   (b2) \quad s = q^2 .
\end{equation*}
Observe that 
\begin{equation*}
	\frac{1}{2} \frac{s_x}{s} = \frac{1}{2} \frac{(q^2)_x}{q^2} = \frac{q_x}{q} = v .
\end{equation*}
Hence we get the extension of the B\"acklund chart in \cite{Fuchssteiner:Carillo:1989} as depicted in Figure \ref{fig scalar chart}.

\begin{figure}[h]

\unitlength1cm
\begin{picture}(15,1.5)
   
   \put(0,0.5){\framebox(1.5,0.5){\shortstack{\footnotesize KdV$(u)$}}}
   \put(2.5,0.5){\framebox(1.5,0.5){\shortstack{\footnotesize mKdV$(v)$}}}
   
   \put(5,0.5){\framebox(2.5,0.5){\shortstack{\footnotesize meta-mKdV$(q)$}}}
   
   \put(8.5,0.5){\framebox(2.5,0.5){\shortstack{\footnotesize Int So KdV $(s)$}}}
   \put(12,0.5){\framebox(2.5,0.5){\shortstack{\footnotesize KdV Sing $(\varphi)$}}}


   \put(-0.25,0){\line(1,0){15}}
   \put(-0.25,1.5){\line(1,0){15}}


   \put(2.4,0.7){\vector(-1,0){0.8}}        \put(1.9,0.9){\footnotesize (a)}
   \put(4.9,0.7){\vector(-1,0){0.8}}        \put(4.3,0.9){\footnotesize (b1)}
   \put(7.6,0.7){\vector(1,0){0.8}}        \put(7.7,0.9){\footnotesize (b2)}
   \put(11.9,0.7){\vector(-1,0){0.8}}       \put(11.4,0.9){\footnotesize (c)}

\end{picture}
\caption{KdV-type equations and their B\"acklund links: the commutative case.}
\label{fig scalar chart}

\end{figure}
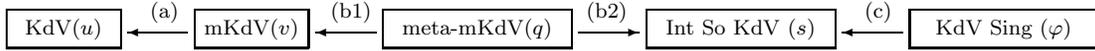

Building on the present work, the commutative case is further elaborated in \cite{Carillo:2017}.
For an occurrence of \eqref{scalar meta mkdv} in a completely different context, we refer to \cite{Horwath:Guengoer:2016}.

\section{Noncommutative analogs of solitons} 
\label{section soliton}

In the present section we obtain an explicit solution of the meta-mKdV \eqref{meta mkdv} depending on two 
parameters in $\mathcal A$, the Banach algebra where solutions take values. Using Theorem \ref{link from meta}, this solution 
is first transferred to the mKdV \eqref{mkdv}, giving new access to the soliton found in \cite{Carl:Schiebold:1999}, see also 
\cite{Carillo:Schiebold:2009} for the mKdV hierarchy. Secondly, we obtain the soliton for the alternative mKdV \eqref{amkdv}. 

\begin{prop} \label{soliton}
   Let $A, B\in \mathcal A$ with $A$ invertible, and let $L(x, t) = \exp(Ax + A^3t)B$. 
   Then a solution of the meta-mKdV equation \eqref{meta mkdv} is given by
   \begin{equation} \label{soliton meta}
      Q = (I - L)^{-1}A^{-1}(I + L) 
   \end{equation}
  on $\Omega=\{ (x,t) \in \mathbb{R}^2 \ | \ I \pm L(x,t)  \mbox{ \it are invertible} \}$.
\end{prop}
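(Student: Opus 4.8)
The plan is to verify the claim by direct substitution, exploiting that the exponential factor linearises everything. First I would record the fundamental relations satisfied by $L$: since $Ax+A^3t$ is a polynomial in $A$, the factor $\exp(Ax+A^3t)$ commutes with $A$, so differentiation only produces $A$ on the \emph{left}, namely $L_x=AL$ and $L_t=A^3L$, whence $L_{xx}=A^2L$, $L_{xxx}=A^3L$, and in particular $L_t=L_{xxx}$. It is convenient to set $\Phi=I-L$ and $\Psi=I+L$, so that $Q=\Phi^{-1}A^{-1}\Psi$ and $Q^{-1}=\Psi^{-1}A\Phi$ on the domain $\Omega$ where both $\Phi$ and $\Psi$ are invertible. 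Both $\Phi$ and $\Psi$ then solve the linear equation $f_t=f_{xxx}$, and they satisfy the one-sided relations
\[ \Phi_x=-AL=A(\Phi-I), \qquad \Psi_x=AL=A(\Psi-I), \]
equivalently $A\Phi = A+\Phi_x$ and $A\Psi = A+\Psi_x$.

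Next I would separate the \emph{linear} part $Q_t-Q_{xxx}$ from the nonlinear term. Because $\Phi$ and $\Psi$ solve $f_t=f_{xxx}$, the time derivative can be written as $Q_t=-\Phi^{-1}\Phi_{xxx}\Phi^{-1}A^{-1}\Psi+\Phi^{-1}A^{-1}\Psi_{xxx}$, while expanding $Q_{xxx}$ by the Leibniz rule and using the expansion of $(\Phi^{-1})_{xxx}$ already displayed in the proof of Proposition \ref{link between metas}, the third-derivative contributions $\Phi_{xxx}$ and $\Psi_{xxx}$ cancel exactly. What survives is a finite sum of lower-order terms built from $\Phi^{-1},\Psi^{-1}$ and the derivatives $\Phi_x,\Phi_{xx},\Psi_x,\Psi_{xx}$.

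It then remains to match this remainder against the nonlinear term $-3Q_{xx}Q^{-1}Q_x$. Here I would insert $Q^{-1}=\Psi^{-1}A\Phi$ and use $A\Phi=A+\Phi_x$, $A\Psi=A+\Psi_x$ to eliminate the central factor $A^{\pm1}$ in favour of derivatives of $\Phi$ and $\Psi$, after which the telescoping identities $\Phi^{-1}\Phi=I$ and $\Psi\Psi^{-1}=I$ carry out the rest of the simplification. I expect this to be the main obstacle: the term is genuinely noncommutative and the factor $A^{-1}$ sits in the middle of $Q$, so $A$ cannot be commuted freely across $L$. The cancellation works only because $L_x=AL$ is a purely one-sided relation, and one must apply $A\Phi=A+\Phi_x$ and $A\Psi=A+\Psi_x$ at precisely the right places.

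Finally, as a consistency check one can feed $Q$ into the Cole-Hopf map $V=C(Q)=Q_xQ^{-1}$ of Theorem \ref{link from meta}; this should reproduce the known mKdV soliton of \cite{Carl:Schiebold:1999}, corroborating both the formula and the bookkeeping. This check cannot replace the direct verification, however, since $V=Q_xQ^{-1}$ is invariant under $Q\mapsto Qg(t)$ and hence does not by itself single out the meta-mKdV among the Cole-Hopf preimages.
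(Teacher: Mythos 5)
Your proposal has a genuine gap: it is a plan for a computation, not the computation itself. The setup is fine --- $L_x=AL$, $L_t=A^3L$, the cancellation of the $\Phi_{xxx}$ and $\Psi_{xxx}$ contributions between $Q_t$ and $Q_{xxx}$ all check out --- but the entire mathematical content of the proposition is the claim that the surviving lower-order remainder
\begin{equation*}
  -3\bigl(\Phi^{-1}\Phi_x\Phi^{-1}\Phi_{xx}\Phi^{-1}
   +\Phi^{-1}\Phi_{xx}\Phi^{-1}\Phi_x\Phi^{-1}
   -2\Phi^{-1}\Phi_x\Phi^{-1}\Phi_x\Phi^{-1}\Phi_x\Phi^{-1}\bigr)A^{-1}\Psi
   -3(\Phi^{-1})_{xx}A^{-1}\Psi_x-3(\Phi^{-1})_xA^{-1}\Psi_{xx}
\end{equation*}
equals $-3Q_{xx}Q^{-1}Q_x$ with $Q^{-1}=\Psi^{-1}A\Phi$. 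You explicitly identify this as ``the main obstacle'' and then dispose of it by asserting that the identities $A\Phi=A+\Phi_x$, $A\Psi=A+\Psi_x$ and ``telescoping'' will ``carry out the rest of the simplification.'' That is precisely the step a proof must exhibit: in the noncommutative setting, with $A^{-1}$ wedged in the middle of $Q$ and $A$ not commuting with $L$, this is where such verifications either close or degenerate, and nothing in your sketch shows which. The concluding consistency check via $V=Q_xQ^{-1}$ is a nice sanity test, and you correctly note it cannot substitute for the verification, so as written the proposition remains unproved.

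For contrast, the paper sidesteps this long expansion entirely. It works with the inverse $\tilde Q=Q^{-1}=(I+L)^{-1}A(I-L)$ and proves two short lemmas: first, $\tilde Q_{xx}=\tilde Q\,\tilde Q_x$ (Lemma \ref{lem1}, a few lines of direct computation); second, $\tilde Q=A-W$ where $W=(I+L)^{-1}(AL+LA)$ is the already-known soliton of the potential KdV \eqref{pkdv} from \cite{Carillo:Schiebold:2009} (Lemma \ref{lem2}). The second lemma gives $\tilde Q_t=\tilde Q_{xxx}-3\tilde Q_x^2$ for free, the first converts $\tilde Q_x^2$ into $\tilde Q_x\tilde Q^{-1}\tilde Q_{xx}$, so $\tilde Q$ solves the mirror meta-mKdV \eqref{mirror meta mkdv}, and Proposition \ref{link between metas} then transfers the conclusion to $Q$. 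If you wish to keep your direct route, you must actually expand and match the nonlinear term; alternatively, proving the single identity $Q_{xx}Q^{-1}Q_x = Q\,(Q^{-1}Q_x)^{\,}_x\,Q^{-1}Q_x + Q_x Q^{-1}Q_x$-type reductions analogous to Lemma \ref{lem1} would give you the same economy the paper achieves.
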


We will first show that  
\begin{equation} \label{soliton mirror meta}
   \tilde Q = (I + L)^{-1}A(I - L) ,
\end{equation}
which is the inverse of \eqref{soliton meta}, solves the mirror meta-mKdV \eqref{mirror meta mkdv}, and then use the link between
meta-mKdV and its mirror in Proposition \ref{link between metas}. To this end, we need two ingredients. 
First we observe

\begin{lem} \label{lem1}
   The function \eqref{soliton mirror meta} satisfies the equation $\tilde Q_{xx} = \tilde Q \tilde Q_x$.
\end{lem}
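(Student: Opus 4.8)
The plan is to exploit the simple linear structure coming from the exponential. First I would record the basic relation $L_x = AL$: since $A$ commutes with the exponent $Ax + A^3t$, differentiating $L = \exp(Ax+A^3t)B$ in $x$ produces a factor $A$ on the \emph{left}, and keeping track of the side on which $A$ appears is essential, because $A$ and $B$ (and hence $L$) need not commute. From this I obtain the single derivative rule I shall use, namely $\big((I+L)^{-1}\big)_x = -(I+L)^{-1}(AL)(I+L)^{-1}$. Note also that only $x$-derivatives enter, so the time evolution plays no role here.

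The key step is to derive a first-order relation for $\tilde Q$ that makes the second differentiation collapse. Writing $R = (I+L)^{-1}$, so that $\tilde Q = RA(I-L)$, the product rule together with $L_x = AL$ gives $\tilde Q_x = -R(AL)\tilde Q - RA^2L$, after recognising $RA(I-L) = \tilde Q$. Factoring out $RA$ on the left and invoking the purely algebraic identity $L\tilde Q + AL = A - \tilde Q$, which follows immediately from $(I+L)\tilde Q = A(I-L)$, I can rewrite this as
\begin{equation*}
   \tilde Q_x = (I+L)^{-1}A(\tilde Q - A).
\end{equation*}
This Riccati-type relation is the heart of the argument.

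Finally I would differentiate this relation once more. Since $A$ is constant, $\tilde Q_{xx} = R_x A(\tilde Q - A) + RA\tilde Q_x$; substituting $R_x = -R(AL)R$ and recognising $RA(\tilde Q - A) = \tilde Q_x$ turns the first summand into $-R(AL)\tilde Q_x$. The two terms then combine to $RA(I-L)\tilde Q_x = \tilde Q\tilde Q_x$, which is exactly the claim. The only genuine obstacle is bookkeeping: because the algebra is noncommutative one must never slide $A$ past $L$ or $\tilde Q$, and the whole computation hinges on spotting the factorisation that yields the first-order relation $\tilde Q_x = (I+L)^{-1}A(\tilde Q - A)$. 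Once that is in hand, the second derivative simplifies automatically, and invertibility of $A$ is not even needed for this lemma.
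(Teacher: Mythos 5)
Your proof is correct, and it organizes the computation differently from the paper. The paper computes $\tilde Q_x$ in fully explicit form, $\tilde Q_x = -(I+L)^{-1}A(I+L)^{-1}(AL+LA)$ (its equation \eqref{eq1}), and then differentiates that expression; the second step produces three terms and needs both the commutation $L(I+L)^{-1}=(I+L)^{-1}L$ and the identity $LA+AL-A(I+L)=-(I-L)A$ before the product $\tilde Q\tilde Q_x$ can be recognized. You instead keep $\tilde Q$ itself inside the formula for its derivative, arriving at the Riccati-type relation $\tilde Q_x=(I+L)^{-1}A(\tilde Q-A)$; differentiating this gives only two terms, each already a left multiple of $\tilde Q_x$, and they combine to $(I+L)^{-1}A(I-L)\,\tilde Q_x=\tilde Q\tilde Q_x$ with no further manipulation. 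The two intermediate forms are in fact equivalent: since $\tilde Q-A=(I+L)^{-1}\bigl(A(I-L)-(I+L)A\bigr)=-(I+L)^{-1}(AL+LA)$, your relation is exactly \eqref{eq1} rewritten via the identity $\tilde Q=A-W$ of Lemma \ref{lem2}, so your argument also yields an independent one-line proof of that lemma. What the paper's explicit form buys is reuse: \eqref{eq1} is invoked again in the proof of Corollary \ref{cor soliton}, whereas your self-referential form, cleaner for Lemma \ref{lem1} itself, recovers the explicit formula only after the extra line above. Your closing remarks are also accurate: only $x$-derivatives enter, and invertibility of $A$ is nowhere needed in this lemma --- $A^{-1}$ first appears when one passes to $Q=\tilde Q^{-1}$ in Proposition \ref{soliton}.
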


\begin{proof} Straightforward computations using the noncommutative product and derivation rules for inverses
give
\begin{eqnarray}
   \tilde Q_x &=& \big( (I + L)^{-1} \big)_x A (I - L) + (I + L)^{-1} A \big( I - L\big)_x \nonumber \\
      &=& - (I + L)^{-1} L_x (I + L)^{-1} \ A (I - L) + (I + L)^{-1} A \ ( - L_x) \nonumber \\
      &=& - (I + L)^{-1} AL (I + L)^{-1} A (I - L) + (I + L)^{-1} A ( - AL) \nonumber \\
      &=& - (I + L)^{-1} A (I + L)^{-1} \Big( LA (I - L) + (I + L) AL  \Big) \nonumber \\
      &=& - (I + L)^{-1} A (I + L)^{-1} (AL+LA) \label{eq1} ,
\end{eqnarray}
and then
\begin{eqnarray*}
   \tilde Q_{xx} &=& (I + L)^{-1}  AL  (I + L)^{-1} A  (I + L)^{-1} (AL+LA)   \\
   			&& \qquad  + (I + L)^{-1} A (I + L)^{-1}  AL  (I + L)^{-1} (AL+LA)  \\
			&& \qquad  - (I + L)^{-1} A (I + L)^{-1}  A (AL+LA)  \\
      &=& (I + L)^{-1} A (I + L)^{-1} \Big(LA+AL - A(I + L) \Big) (I + L)^{-1} (AL+LA)  \\
      &=& - (I + L)^{-1} A (I + L)^{-1}  \cdot (I - L)A \cdot (I + L)^{-1} (AL+LA)  \\
      &=& - (I + L)^{-1} A(I - L) \cdot (I + L)^{-1} A  (I + L)^{-1} (AL+LA)  \\
      &=& \tilde Q \tilde Q_x  ,
\end{eqnarray*}
which is the claim.
\end{proof}

The second ingredient is the close relation between \eqref{soliton mirror meta}  
and the soliton 
\begin{equation} \label{soliton pkdv}
      W = (I + L)^{-1}(AL + LA)
\end{equation}
of the potential KdV equation 
\begin{equation} \label{pkdv}
   W_t = W_{xxx} + 3 W_x^2
\end{equation}
derived in \cite[Corollary 9]{Carillo:Schiebold:2009}. More precisely, the following lemma holds.

\begin{lem} \label{lem2}
   For $\tilde Q$, $W$ as in \eqref{soliton mirror meta}, \eqref{soliton pkdv}, we have $\tilde Q = A - W$. 
\end{lem}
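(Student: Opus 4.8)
The goal is to prove the identity $\tilde Q = A - W$, where
\[
   \tilde Q = (I + L)^{-1}A(I - L), \qquad W = (I + L)^{-1}(AL + LA).
\]
Both expressions share the common left factor $(I+L)^{-1}$, so the natural strategy is to left-multiply the desired identity by $(I+L)$ and reduce everything to an equality in $\mathcal A$ that involves no inverses. The plan is to verify
\[
   A(I - L) = (I + L)A - (AL + LA),
\]
since the right-hand side is precisely $(I+L)\big(A - W\big)$ once one recognises $(I+L)A = A + LA$ and subtracts $(AL+LA)$.

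The verification of the displayed identity is a one-line distributive computation:
\[
   (I + L)A - (AL + LA) = A + LA - AL - LA = A - AL = A(I - L).
\]
The only point to keep track of in the noncommutative setting is that $L$ and $A$ need not commute, so the cancellation must come from the $+LA$ term produced by $(I+L)A$ against the $-LA$ term in $W$, leaving exactly $A - AL = A(I-L)$; no commutator is needed and the two $LA$ contributions cancel literally. I would then left-multiply both sides by $(I+L)^{-1}$, which is legitimate on $\Omega$ since $I+L$ is invertible there, to recover $\tilde Q = (I+L)^{-1}A(I-L) = A - (I+L)^{-1}(AL+LA) = A - W$.

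Strictly speaking there is nothing hard here; the statement is an algebraic rearrangement rather than an analytic or differential fact, and its role is purely preparatory for combining Lemma \ref{lem1} with the known potential-KdV soliton \eqref{soliton pkdv}. The only conceivable pitfall is a sign or ordering slip when expanding the product in the noncommutative algebra, so the single precaution I would take is to carry out the distribution without ever commuting $A$ past $L$, confirming that the cancellation of the $LA$ terms is exact and that the surviving $-AL$ reassembles correctly into $A(I-L)$. Once that is checked, the proof is complete.
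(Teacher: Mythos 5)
Your proof is correct: left-multiplying by $(I+L)$ and checking $(I+L)A - (AL+LA) = A + LA - AL - LA = A - AL = A(I-L)$ is exactly the verification required, with the $LA$ terms cancelling literally and no commutation of $A$ past $L$ ever needed. The paper states Lemma \ref{lem2} without proof, treating it as an immediate algebraic identity, and your computation is precisely the one intended.
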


We are now in the position to prove  Proposition \ref{soliton}.

\begin{proof}[Proof of Proposition \ref{soliton}]
Note that Lemma \ref{lem2} implies that \eqref{soliton mirror meta} satisfies the potential KdV \eqref{pkdv} with a minus sign 
in front of the nonlinear term, $\tilde Q_t=\tilde Q_{xxx}-3 \tilde Q_x^2$. To prove that $\tilde Q$ solves the mirror 
meta-mKdV \eqref{mirror meta mkdv}, it hence remains to rewrite the nonlinear term $\tilde Q_x^2 
= \tilde Q_x \tilde Q^{-1}\tilde Q_{xx}$ using Lemma \ref{lem1}.
Finally,  since $Q=\tilde Q^{-1}$, Proposition \ref{soliton} follows from Proposition \ref{link between metas}.
\end{proof}

Using the Cole-Hopf transformations \eqref{cole hopf}, \eqref{mirror cole hopf} and Theorem \ref{link from meta},
it is straightforward to derive solutions of the mKdV \eqref{mkdv} and the alternative mKdV \eqref{amkdv} 
from the solution in Proposition \ref{soliton}.

\begin{cor} \label{cor soliton}
   Let $A, B \in \mathcal A$ with $A$ invertible, and let $L(x, t) = \exp(Ax + A^3t)B$. 
   Then 
   \begin{itemize}
      \item[a)] $V =  (I-L^2)^{-1} (AL+LA)$ is a solution of the mKdV \eqref{mkdv},
      \item[b)] $\tilde V =  (I + L)^{-1} A (I + L)^{-1} (AL+LA) (I - L)^{-1} A^{-1} (I + L) $ is a solution of the alternative 
      		mKdV equation \eqref{amkdv}.
   \end{itemize}
\end{cor}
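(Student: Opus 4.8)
The plan is to read everything off from Proposition~\ref{soliton} together with Theorem~\ref{link from meta}, so that no new differential equation has to be checked. Since $Q=(I-L)^{-1}A^{-1}(I+L)$ from \eqref{soliton meta} solves the meta-mKdV, Theorem~\ref{link from meta} tells us at once that $V=C(Q)=Q_xQ^{-1}$ solves \eqref{mkdv} and that $\tilde V=\tilde C(Q)=Q^{-1}Q_x$ solves \eqref{amkdv}, where $C$, $\tilde C$ are the Cole-Hopf maps \eqref{cole hopf}, \eqref{mirror cole hopf}. The only remaining task is to rewrite these two abstract products in the closed form claimed in a) and b).

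To do this efficiently I would avoid recomputing $Q_x$ from scratch and instead recycle the derivative already obtained in \eqref{eq1}, namely $\tilde Q_x=-(I+L)^{-1}A(I+L)^{-1}(AL+LA)$, where $\tilde Q=Q^{-1}$ is the mirror soliton \eqref{soliton mirror meta}. The bridge is the elementary rule for differentiating an inverse: from $\tilde Q Q=I$ one gets $Q_x=-Q\tilde Q_x Q$. Feeding this into the two Cole-Hopf expressions and cancelling the central $Q$ collapses them to
\begin{equation*}
   V=Q_xQ^{-1}=-Q\tilde Q_x, \qquad \tilde V=Q^{-1}Q_x=-\tilde Q_x Q,
\end{equation*}
so that each target quantity becomes a single product of factors that are already in hand.

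For part a) I would then substitute $Q=(I-L)^{-1}A^{-1}(I+L)$ and $\tilde Q_x$ from \eqref{eq1}; the blocks $(I+L)(I+L)^{-1}$ and $A^{-1}A$ telescope, leaving $V=(I-L)^{-1}(I+L)^{-1}(AL+LA)$, and the purely formal identity $(I+L)(I-L)=I-L^2$ (which needs no commutativity) produces exactly $V=(I-L^2)^{-1}(AL+LA)$. For part b) I would insert the same two ingredients into $\tilde V=-\tilde Q_x Q$; here nothing cancels, and one simply records the resulting five-factor product $(I+L)^{-1}A(I+L)^{-1}(AL+LA)(I-L)^{-1}A^{-1}(I+L)$, which is the stated formula.

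The computation is short, so I do not expect a genuine obstacle; the only delicate points are the bookkeeping ones forced by noncommutativity. One must keep in mind that $L_x=AL$ but \emph{not} $LA$ (since $A$ commutes with $\exp(Ax+A^3t)$ but not with $B$), that $(I+L)^{-1}$ does not commute with $A$, and that the cancellation $(I+L)(I-L)=I-L^2$ holds only in that order. Preserving the order of all factors throughout is the whole difficulty.
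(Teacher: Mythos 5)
Your proposal is correct and is essentially the paper's own argument: the paper likewise recycles $\tilde Q_x$ from \eqref{eq1}, factors it once as $-\tilde Q\cdot(I-L^2)^{-1}(AL+LA)$ and once as $-\bigl[(I+L)^{-1}A(I+L)^{-1}(AL+LA)(I-L)^{-1}A^{-1}(I+L)\bigr]\cdot\tilde Q$, and concludes via Proposition \ref{link from mirror meta}, which encodes exactly the identities $-\tilde Q^{-1}\tilde Q_x=C(Q)$ and $-\tilde Q_x\tilde Q^{-1}=\tilde C(Q)$ that you derive by hand from $Q_x=-Q\tilde Q_xQ$. The only cosmetic difference is that you phrase the conclusion through Theorem \ref{link from meta} applied to $Q$ rather than through its mirror counterpart applied to $\tilde Q=Q^{-1}$; the telescoping computations are identical.
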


\begin{proof} Let $\tilde Q$ be given as in \eqref{soliton mirror meta}.
   Starting with the expression for $\tilde Q_x$ derived in \eqref{eq1}, cf.~the proof of Lemma \ref{lem1}, 
   we have
   \begin{eqnarray*}
      \tilde Q_x &\stackrel{\eqref{eq1}}{=}& - (I + L)^{-1} A (I + L)^{-1} (AL+LA) \\
      	      &=& - (I + L)^{-1} A \bigm(  (I-L) \cdot (I-L)^{-1} \bigm) (I + L)^{-1}(AL+LA) \\
	      &=& -\tilde Q \cdot (I-L^2)^{-1} (AL+LA).
   \end{eqnarray*}
   On the other hand,
   \begin{eqnarray*}
      \tilde Q_x &=& - (I + L)^{-1} A (I + L)^{-1} (AL+LA) Q^{-1} \cdot Q \\
             &=& - (I + L)^{-1} A (I + L)^{-1} (AL+LA)    (I-L)^{-1} A^{-1} (I+L) \cdot  \tilde Q. 
   \end{eqnarray*}
   The proof is complete upon use of Proposition \ref{link from mirror meta}.
\end{proof}

\section{Recursion operators} 
\label{section recursion operator}

The aim of this section is to derive strong symmetries \cite{Fuchssteiner:1979} (or recursion operators in the sense
of \cite{Olver:1977}) of the meta-mKdV \eqref{meta mkdv} and its mirror \eqref{mirror meta mkdv}. 
A short summary of the related terminology is given in Appendix \ref{section terminology}. For more background information 
we refer to \cite{Fokas:Fuchssteiner:1981,Fuchssteiner:1979}, see also \cite{Schiebold:2011}.

Recall that the recursion operator 
\begin{eqnarray}
   \Psi (V) &=& (D - C_VD^{-1}C_V) (D - A_VD^{-1}A_V)  \label{mkdv rec} 
\end{eqnarray}
of the mKdV \eqref{mkdv} and the recursion operator  
\begin{eqnarray} 
   \tilde \Psi (\tilde V) &=& (D+2C_{\tilde V})(D-2R_{\tilde V}) (D+C_{\tilde V})^{-1} (D+2L_{\tilde V})D (D+C_{\tilde V})^{-1} 
   																		\nonumber \\ \label{amkdv rec}
      				&=& ({\tilde D}+C_{\tilde V}) (\tilde D-A_{\tilde V}) \tilde D^{-1} (\tilde D+A_{\tilde V})(\tilde D-C_{\tilde V}) \tilde D^{-1} 
\end{eqnarray}
of the alternative mKdV \eqref{amkdv} were derived in \cite{Guerses:Karasu:Sokolov:1999}, see 
also \cite{ Carillo:LoSchiavo:Schiebold:2016, Carillo:Schiebold:2009}.
In \eqref{amkdv rec} occurs the derivation\footnote{ 
That $\tilde D$ is a derivation means that $\tilde D (UV) = \tilde D(U)V+U\tilde D(V)$.}
\begin{equation} \label{derivation}
    \tilde D := D+C_{\tilde V} .
\end{equation}
As usual, $D$ stands for derivation with respect to $x$, $C_U$ and $A_U$ denote the commutator and anti-commutator 
with respect to $U$, and we will use $L_U$, $R_U$ for left- and right multiplication with $U$, respectively.

Using a structure theorem by Fokas and Fuchssteiner \cite{Fokas:Fuchssteiner:1981} about transferring strong 
symmetries along B\"acklund transformations, we deduce the meta-mKdV recursion operator \eqref{meta mkdv}   
from the Cole Hopf link in Theorem \ref{link from meta}.
Consider 
\begin{eqnarray*}
   B(Q,V) &=& VQ - Q_x ,
\end{eqnarray*}
which links the mKdV \eqref{mkdv} and the meta-mKdV \eqref{meta mkdv} according to Theorem \ref{link from meta}. 
Computing the Fr\'echet derivatives
\begin{eqnarray*}
	B_Q [\hat Q] &=& \frac{\partial}{\partial \epsilon} \Big|_{\epsilon=0}  \Big( V(Q+\epsilon \hat Q) -(Q+\epsilon \hat Q)_x \Big)  
					\ = \ V \hat Q - \hat Q_x \ = \ - (D-L_V) \ \hat Q ,\\
	B_V [\hat V] &=& \frac{\partial}{\partial \epsilon} \Big|_{\epsilon=0}  \Big( (V+\epsilon \hat V)Q - Q_x \Big)  
					\ = \ \hat V Q \ = \ R_Q \ \hat V ,
\end{eqnarray*}
we obtain the transformation operator
\begin{eqnarray*}
   \Pi &=& B_Q^{-1} B_V  \ = \ - (D-L_V)^{-1} R_Q .
\end{eqnarray*}
We will need another representation of $\Pi$.

\begin{lem} \label{identity}
   If $B(Q,V)=0$, then the identity $(D-L_V)^{-1} R_Q = R_Q (D-C_V)^{-1}$ holds.
\end{lem}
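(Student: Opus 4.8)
The plan is to reduce the stated identity, which involves the formal inverse operators, to an \emph{intertwining relation} between the operators themselves, namely
\begin{equation*}
   (D - L_V)\, R_Q = R_Q\, (D - C_V),
\end{equation*}
and then to verify this relation by a direct computation on an arbitrary algebra-valued test function.

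First I would rewrite the commutator in terms of left and right multiplication: since $C_V(X) = VX - XV = (L_V - R_V)(X)$, one has $D - C_V = D - L_V + R_V$. Regarding $D - L_V$ and $D - C_V$ as invertible operators (in the formal, integro-differential sense standard for recursion operators), the claimed identity $(D-L_V)^{-1} R_Q = R_Q (D-C_V)^{-1}$ is equivalent to the intertwining relation above: one direction follows by left-multiplying the claimed identity by $D-L_V$ and right-multiplying by $D-C_V$, and the converse by the same two manipulations in reverse.

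To establish the intertwining relation, I would apply both sides to an arbitrary $X$ and use the Leibniz rule $D(XQ) = X_x Q + X Q_x$. The left-hand side yields
\begin{equation*}
   (D - L_V)(XQ) = X_x Q + X Q_x - V X Q,
\end{equation*}
while the right-hand side yields
\begin{equation*}
   R_Q\big( X_x - VX + XV \big) = X_x Q - V X Q + X V Q.
\end{equation*}
Comparing the two expressions, the terms $X_x Q$ and $-VXQ$ cancel, and what remains is exactly the requirement $X Q_x = X V Q$ for all $X$. This holds precisely because the hypothesis $B(Q,V) = 0$ says $Q_x = VQ$; this is the single point at which the B\"acklund relation enters.

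The computation is routine, so the only step deserving comment is the passage between the intertwining form and the form with inverses, which rests on interpreting $(D - L_V)^{-1}$ and $(D - C_V)^{-1}$ as formal inverses in the operator algebra, as is customary in the theory of strong symmetries. I therefore do not expect a genuine obstacle; the content of the lemma is that the Cole--Hopf constraint $Q_x = VQ$ is exactly the condition making $R_Q$ an intertwiner between $D - L_V$ and $D - C_V$.
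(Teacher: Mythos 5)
Your proposal is correct and follows essentially the same route as the paper: both reduce the statement to the intertwining relation $(D-L_V)R_Q = R_Q(D-C_V)$ and observe that the hypothesis $Q_x = VQ$ is exactly what makes it hold. The only difference is cosmetic -- the paper manipulates operator symbols via the product rule $DR_Q = R_QD + R_{Q_x}$ and the identity $R_{VQ} = R_QR_V$, while you evaluate both sides on a test function $X$, which is the same computation written pointwise.
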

\begin{proof} 
   Using the product rule $DR_Q=R_QD+R_{Q_x}$ and the fact that left- and right multiplication commute, 
   we get
  \begin{equation*}
  	 (D-L_V)R_Q = DR_Q -L_VR_Q = R_QD+R_{Q_x} -R_QL_V = R_Q(D-L_V) + R_{Q_x}.
   \end{equation*}
   Since $B(Q,V)=0$, we have $Q_x=VQ$ and therefore $R_{Q_x}=R_{VQ}=R_QR_V$. This shows
   \begin{equation*}
      (D-L_V)R_Q = R_Q (D-L_V) + R_QR_V = R_Q (D-C_V),
   \end{equation*}
   which implies the lemma.
\end{proof}

By the structure theorem,
\begin{eqnarray*}
   \Phi(Q) &=& \Pi \ \Psi(V) \Pi^{-1} \\
      &=&  (D-L_V)^{-1} R_Q \ (D - C_VD^{-1}C_V) (D - A_VD^{-1}A_V) \  R_{Q^{-1}}(D-L_V) \\
      &=& R_Q (D-C_V)^{-1} \ (D - C_VD^{-1}C_V) (D - A_VD^{-1}A_V) \  (D-C_V) R_{Q^{-1}} \\
      &=& R_Q D^{-1} (D + C_V) (D - A_VD^{-1}A_V)  (D-C_V)  R_{Q^{-1}}  \\
      &=& R_Q D^{-1} (D + C_{Q_xQ^{-1}}) (D - A_{Q_xQ^{-1}}D^{-1}A_{Q_xQ^{-1}})  (D-C_{Q_xQ^{-1}})  R_{Q^{-1}} .
\end{eqnarray*}
is a recursion operator of the meta-mKdV \eqref{meta mkdv}. In the calculation above we have used Lemma \ref{identity}
and the factorisation $(D - C_VD^{-1}C_V) = (D-C_V)D^{-1}(D+C_V)$.

This implies part a) of the following theorem. For a verification of part b) see 
Appendix \ref{section recursion operator mirror}.

\begin{thrm}  \label{rec op} For the operators
   \begin{eqnarray*}
   	 \Phi(Q) &=& R_{Q} D^{-1} (D+C_{Q_x Q^{-1}}) (D-A_{Q_x Q^{-1}}D^{-1}A_{Q_x Q^{-1}}) (D-C_{Q_x Q^{-1}})R_{Q^{-1}}  , \\
	 \tilde \Phi(\tilde Q) &=& L_{\tilde Q} D^{-1} (D-C_{\tilde Q^{-1}\tilde Q_x})
	                                  			(D-A_{\tilde Q^{-1}\tilde Q_x}D^{-1}A_{\tilde Q^{-1}\tilde Q_x})
										   (D+C_{\tilde Q^{-1}\tilde Q_x})L_{\tilde Q^{-1}}  ,
   \end{eqnarray*}
   it holds that
   \begin{itemize}
   	\item[a)] $\Phi(Q)$ is a strong symmetry for the meta-mKdV \eqref{meta mkdv},
	\item[b)] $\tilde \Phi(\tilde Q)$ is a strong symmetry for the mirror meta-mKdV \eqref{mirror meta mkdv}.
   \end{itemize}
\end{thrm}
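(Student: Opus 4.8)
Part a) is already secured by the computation immediately preceding the theorem: the Fokas--Fuchssteiner structure theorem, applied to the Cole--Hopf link $B(Q,V)=VQ-Q_x$ with transformation operator $\Pi=-(D-L_V)^{-1}R_Q$, transports the known mKdV recursion operator \eqref{mkdv rec} to $\Phi(Q)$, the only nonformal inputs being Lemma \ref{identity} and the factorisation $(D-C_VD^{-1}C_V)=(D-C_V)D^{-1}(D+C_V)$. So the genuine content is part b), and the plan is to \emph{deduce} it from part a) rather than to repeat the argument from scratch.

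The cleanest route exploits the fact that the \emph{mirror} construction is nothing but passage to the opposite algebra. I would equip the underlying Banach algebra with the reversed product $a\ast b:=ba$ and observe that the meta-mKdV \eqref{meta mkdv}, read in $\mathcal{A}^{\mathrm{op}}=(\mathcal{A},\ast)$, becomes precisely the mirror meta-mKdV \eqref{mirror meta mkdv}, since $\tilde Q_{xx}\ast\tilde Q^{-1}\ast\tilde Q_{x}=\tilde Q_{x}\tilde Q^{-1}\tilde Q_{xx}$. Because part a) was established over an \emph{arbitrary} Banach algebra, it applies verbatim over $\mathcal{A}^{\mathrm{op}}$ and produces a strong symmetry $\Phi^{\mathrm{op}}(\tilde Q)$ of the mirror meta-mKdV, built from the opposite multiplication operators and from the opposite Cole--Hopf datum $C^{\mathrm{op}}(\tilde Q)=\tilde Q_{x}\ast\tilde Q^{-1}=\tilde Q^{-1}\tilde Q_{x}=:\tilde V$.

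It then remains to translate $\Phi^{\mathrm{op}}$ back into $\mathcal{A}$. The dictionary is $R^{\mathrm{op}}_{U}=L_{U}$, $L^{\mathrm{op}}_{U}=R_{U}$, $A^{\mathrm{op}}_{U}=A_{U}$ and $C^{\mathrm{op}}_{U}=-C_{U}$, while $D$ and $D^{-1}$ are unchanged. Substituting these into the part a) operator turns
\[
   R^{\mathrm{op}}_{\tilde Q}D^{-1}(D+C^{\mathrm{op}}_{\tilde V})(D-A^{\mathrm{op}}_{\tilde V}D^{-1}A^{\mathrm{op}}_{\tilde V})(D-C^{\mathrm{op}}_{\tilde V})R^{\mathrm{op}}_{\tilde Q^{-1}}
\]
into exactly $L_{\tilde Q}D^{-1}(D-C_{\tilde V})(D-A_{\tilde V}D^{-1}A_{\tilde V})(D+C_{\tilde V})L_{\tilde Q^{-1}}=\tilde\Phi(\tilde Q)$, which is the claim. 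The sign flip on the commutator factors together with the invariance of the anticommutator factors is precisely what produces the sign-reversed ordering $(D-C_{\tilde V})\cdots(D+C_{\tilde V})$ appearing in $\tilde\Phi$.

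The step I expect to require the most care is justifying that \emph{strong symmetry} is an algebra-intrinsic notion, so that the structure theorem may legitimately be invoked in $\mathcal{A}^{\mathrm{op}}$: one must verify that the defining identity $\Phi'[K]=[K',\Phi]$ (with $K'$ the Fréchet derivative of the flow) is preserved under the anti-isomorphism $\mathcal{A}\to\mathcal{A}^{\mathrm{op}}$, equivalently that the reversed product does not disturb the Fréchet-derivative bookkeeping. An alternative, more computational route would avoid $\mathcal{A}^{\mathrm{op}}$ and instead repeat the part a) derivation through the mirror Cole--Hopf links of Proposition \ref{link from mirror meta}; this needs a left-multiplication analogue of Lemma \ref{identity} (of the form $(D-R_{\tilde V})^{-1}L_{\tilde Q}=L_{\tilde Q}(D+C_{\tilde V})^{-1}$, or similar) together with the matching factorisation of the recursion operator \eqref{amkdv rec}. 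There the obstacle is purely the heavier noncommutative algebra, whereas the opposite-algebra route isolates the single conceptual point.
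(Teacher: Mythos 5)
Your proposal is correct, but for part b) it takes a genuinely different route than the paper. The paper proves b) in Appendix \ref{section recursion operator mirror} by running the Fokas--Fuchssteiner structure theorem a second time: it uses the Cole--Hopf link $B(\tilde Q,V)=\tilde QV+\tilde Q_x$ between the mKdV \eqref{mkdv} and the mirror meta-mKdV \eqref{mirror meta mkdv} (Proposition \ref{link from mirror meta}\,a)), computes $\Pi=B_{\tilde Q}^{-1}B_V=(D+R_V)^{-1}L_{\tilde Q}$, and converts this via the left-multiplication analogue of Lemma \ref{identity} (Lemma \ref{more identities}\,b), which is exactly the identity $(D-R_{\tilde V})^{-1}L_{\tilde Q}=L_{\tilde Q}(D+C_{\tilde V})^{-1}$ you anticipated, up to sign conventions) before conjugating $\Psi(V)$ from \eqref{mkdv rec}; note it transports the mKdV operator, not \eqref{amkdv rec}, which is only used in the paper's supplementary derivation. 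Your opposite-algebra argument replaces all of this by a single conceptual observation: the mirror equation is the meta-mKdV read in $\mathcal{A}^{\mathrm{op}}$, part a) holds over an arbitrary Banach algebra (and $\mathcal{A}^{\mathrm{op}}$ is again one), and the dictionary $R^{\mathrm{op}}_U=L_U$, $A^{\mathrm{op}}_U=A_U$, $C^{\mathrm{op}}_U=-C_U$ transforms the part a) operator into precisely $\tilde\Phi(\tilde Q)$; your dictionary computations are correct. The point you flag as delicate is in fact immediate: passing to $\mathcal{A}^{\mathrm{op}}$ changes neither the underlying Banach space, nor the function space $\mathcal{F}$, nor any Fr\'echet derivative or operator composition, so the vector field of the mirror equation and the operator $\tilde\Phi$ are \emph{literally the same objects} as their op-algebra counterparts, merely written in different notation; hence ``strong symmetry'' transfers tautologically. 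What each approach buys: yours is shorter, requires no new lemma, and exposes the structural reason for the sign pattern $(D-C_{\tilde V})\cdots(D+C_{\tilde V})$ versus $(D+C_{V})\cdots(D-C_{V})$; the paper's route stays uniform with part a), produces the explicit transformation operator $\Pi=L_{\tilde Q}(D-C_V)^{-1}$ and Lemma \ref{more identities}, both of which are reused for the alternative representations in Proposition \ref{rec op derivation} and the hierarchy arguments of Section \ref{section hierarchy}. Both routes share the same formal caveats (existence of $D^{-1}$, $(D\pm C_V)^{-1}$) acknowledged in the paper's Appendix \ref{section terminology}.
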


\quad

The next proposition explains the connection between $D$ and the derivation $\tilde D$ in \eqref{derivation}.
Denote by $K_Q$ conjugation with $Q$, i.e.
\begin{equation*} 
	K_Q = L_{Q^{-1}}R_Q .
\end{equation*}

\begin{prop}[{\cite[Proposition 3.3]{Carillo:LoSchiavo:Schiebold:2016}}] 
   For $V=Q_xQ^{-1}$ and $\tilde V = Q^{-1}Q_x$, we have
   \begin{itemize}
   	\item[a)]  $K_Q D K_Q^{-1}  = \tilde D$,
	\item[b)] $K_Q C_V K_{Q^{-1}} = C_{\tilde V}$ and $K_Q A_V K_{Q^{-1}} = A_{\tilde V}$.
   \end{itemize}
\end{prop}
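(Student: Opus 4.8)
The plan is to exploit that $K_Q = L_{Q^{-1}}R_Q$ is conjugation by $Q$, so that its inverse is $K_Q^{-1} = L_Q R_{Q^{-1}} = K_{Q^{-1}}$, sending $X \mapsto QXQ^{-1}$; since left and right multiplications always commute, this identification is immediate. I would dispose of part b) first, as it is purely algebraic, and then treat part a), where the derivation $D$ interacts with $Q$ through the product rule and thereby produces a correction term.

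For part b), I would split $C_V = L_V - R_V$ and $A_V = L_V + R_V$ and conjugate the one-sided multiplications separately. A direct computation gives $K_Q L_V K_Q^{-1}(X) = Q^{-1}(V\cdot QXQ^{-1})Q = (Q^{-1}VQ)X$ and likewise $K_Q R_V K_Q^{-1}(X) = X(Q^{-1}VQ)$, so that $K_Q L_V K_Q^{-1} = L_{Q^{-1}VQ}$ and $K_Q R_V K_Q^{-1} = R_{Q^{-1}VQ}$. The decisive point is then the conjugation identity $Q^{-1}VQ = Q^{-1}(Q_xQ^{-1})Q = Q^{-1}Q_x = \tilde V$, which converts $V$ into $\tilde V$. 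Taking the difference and the sum of the two one-sided relations yields $K_Q C_V K_Q^{-1} = C_{\tilde V}$ and $K_Q A_V K_Q^{-1} = A_{\tilde V}$, which is the assertion of b) once one recalls $K_{Q^{-1}} = K_Q^{-1}$.

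For part a), I would compute $K_Q D K_Q^{-1}(X) = Q^{-1}\,D(QXQ^{-1})\,Q$ directly. Expanding with the product rule and the identity $(Q^{-1})_x = -Q^{-1}Q_xQ^{-1}$ gives $D(QXQ^{-1}) = Q_xXQ^{-1} + QX_xQ^{-1} - QXQ^{-1}Q_xQ^{-1}$. After sandwiching between $Q^{-1}$ and $Q$, the middle term collapses to $X_x$, while the outer two become $(Q^{-1}Q_x)X$ and $-X(Q^{-1}Q_x)$. Recognising $Q^{-1}Q_x = \tilde V$, this equals $X_x + [\tilde V, X] = (D + C_{\tilde V})(X) = \tilde D(X)$, establishing $K_Q D K_Q^{-1} = \tilde D$.

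The computations are routine, and the only point requiring care is the appearance of the commutator correction in part a). It originates entirely from differentiating the inverse $Q^{-1}$ through $(Q^{-1})_x = -Q^{-1}Q_xQ^{-1}$; this is precisely the term that distinguishes $\tilde D$ from $D$ and explains structurally why conjugation by $Q$ fails to leave the plain derivation invariant, producing instead the twisted derivation $\tilde D$.
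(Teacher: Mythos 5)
Your proof is correct in every step: the identification $K_Q^{-1} = L_Q R_{Q^{-1}} = K_{Q^{-1}}$, the conjugation formulas $K_Q L_V K_{Q^{-1}} = L_{Q^{-1}VQ}$ and $K_Q R_V K_{Q^{-1}} = R_{Q^{-1}VQ}$ combined with $Q^{-1}VQ = Q^{-1}(Q_xQ^{-1})Q = \tilde V$, and the product-rule computation in part a) in which differentiating $Q^{-1}$ via $(Q^{-1})_x = -Q^{-1}Q_xQ^{-1}$ produces exactly the correction term $C_{\tilde V}$, so that $K_Q D K_Q^{-1} = D + C_{\tilde V} = \tilde D$. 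Note that the present paper gives no proof of this proposition at all --- it is imported verbatim from \cite[Proposition 3.3]{Carillo:LoSchiavo:Schiebold:2016} --- so your computation supplies precisely the verification the paper delegates to that reference, and it is the natural (and surely the intended) argument.
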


In particular, $K_Q D^{-1} K_Q^{-1}  = \tilde D^{-1}$ and $K_Q (D+C_V) K_Q^{-1}  = \tilde D+C_{\tilde V}$,
implying 
\begin{equation*}
	K_Q D^{-1} (D+C_V) K_Q^{-1}  = K_Q D^{-1} K_{Q^{-1}} \cdot K_Q (D+C_V) K_Q^{-1} = \tilde D (\tilde D+C_{\tilde V}).
\end{equation*}
As a consequence, 
\begin{eqnarray*}
   \Phi(Q) &=& R_Q D^{-1} (D + C_V) (D - A_VD^{-1}A_V)  (D-C_V)  R_{Q^{-1}} \\
      &=& L_Q \Big( K_Q D^{-1} (D + C_V) (D - A_V) D^{-1} (D+A_V)  (D-C_V)  K_{Q^{-1}} \Big) L_{Q^{-1}} \\
      &=& L_Q \Big( {\tilde D}^{-1} ({\tilde D}+C_{\tilde V})
								({\tilde D}-A_{\tilde V}) {\tilde D}^{-1}(\tilde D +A_{\tilde V}) 
									 ({\tilde D}-C_{\tilde V}) \Big) L_{Q^{-1}} \\
      &=& L_Q {\tilde D}^{-1} ({\tilde D}+C_{\tilde V})
								({\tilde D}-A_{\tilde V}{\tilde D}^{-1}A_{\tilde V}) 
									 ({\tilde D}-C_{\tilde V}) L_{Q^{-1}}.
\end{eqnarray*}
Hence conjugation gives the following alternative representation of the operators in Theorem \ref{rec op}.

\begin{prop} \label{rec op derivation}
   Define the derivations $\mathbb{D} = D + C_{Q^{-1}Q_x}$ and $\mathbb{\tilde D} = D - C_{\tilde Q_x\tilde Q^{-1}}$.
   Then the operators in Theorem \ref{rec op} can be rewritten as
   \begin{eqnarray*}
		   \Phi(Q) &=& L_{Q} \mathbb{D}^{-1} (\mathbb{D}+C_{Q^{-1}Q_x})
								(\mathbb{D}-A_{Q^{-1}Q_x}\mathbb{D}^{-1}A_{Q^{-1} Q_x}) 
									 (\mathbb{D}-C_{Q^{-1} Q_x}) L_{Q^{-1}} ,
			\\
		   \tilde \Phi(\tilde Q) &=& R_{\tilde Q} \mathbb{\tilde D}^{-1} (\mathbb{\tilde D}-C_{\tilde Q_x\tilde Q^{-1}})
								(\mathbb{\tilde D}-A_{\tilde Q_x\tilde Q^{-1}}\mathbb{\tilde D}^{-1}A_{\tilde Q_x\tilde Q^{-1}}) 
											  (\mathbb{\tilde D}+C_{\tilde Q_x\tilde Q^{-1}}) R_{\tilde Q^{-1}} .
   \end{eqnarray*}
\end{prop}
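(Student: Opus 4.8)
The plan is to treat the two assertions separately, since part a) requires essentially no new work while part b) needs the mirror analogue of the conjugation relations $K_Q D K_Q^{-1} = \tilde D$, $K_Q C_V K_{Q^{-1}} = C_{\tilde V}$, $K_Q A_V K_{Q^{-1}} = A_{\tilde V}$ recorded just above the proposition. For part a), I would simply read the claim off the displayed calculation immediately preceding the statement, which brings $\Phi(Q)$ into the form
\[
   \Phi(Q) = L_Q\,\tilde D^{-1}(\tilde D + C_{\tilde V})(\tilde D - A_{\tilde V}\tilde D^{-1}A_{\tilde V})(\tilde D - C_{\tilde V})\,L_{Q^{-1}},
\]
with $\tilde V = Q^{-1}Q_x$ and $\tilde D = D + C_{\tilde V}$. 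Since $\mathbb{D} = D + C_{Q^{-1}Q_x} = D + C_{\tilde V} = \tilde D$, and $C_{\tilde V} = C_{Q^{-1}Q_x}$, $A_{\tilde V} = A_{Q^{-1}Q_x}$, this is verbatim the asserted expression for $\Phi(Q)$, so nothing further is needed.

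For part b) I would first establish the mirror counterpart of the conjugation relations. Writing $K_{\tilde Q}^{-1}$ for conjugation by $\tilde Q^{-1}$ (that is, $X \mapsto \tilde Q X \tilde Q^{-1}$), a direct differentiation of $\tilde Q X \tilde Q^{-1}$ using $(\tilde Q^{-1})_x = -\tilde Q^{-1}\tilde Q_x\tilde Q^{-1}$ gives
\[
   K_{\tilde Q}^{-1}\,D\,K_{\tilde Q} = D - C_{\tilde Q_x\tilde Q^{-1}} = \mathbb{\tilde D},
\]
together with the analogous identities $K_{\tilde Q}^{-1}C_{\tilde Q^{-1}\tilde Q_x}K_{\tilde Q} = C_{\tilde Q_x\tilde Q^{-1}}$ and $K_{\tilde Q}^{-1}A_{\tilde Q^{-1}\tilde Q_x}K_{\tilde Q} = A_{\tilde Q_x\tilde Q^{-1}}$. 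Alternatively, one may deduce these from the meta-case relations through $\tilde Q = Q^{-1}$, using $\tilde Q^{-1}\tilde Q_x = -Q_xQ^{-1}$, $\tilde Q_x\tilde Q^{-1} = -Q^{-1}Q_x$ and $K_{\tilde Q}^{-1} = K_Q$, as in Proposition \ref{link between metas}.

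With these relations in hand I would mimic the meta-case derivation, now with the roles of left and right multiplication interchanged. Starting from the form of $\tilde\Phi(\tilde Q)$ in Theorem \ref{rec op} b), I would factor the outer multiplications as $L_{\tilde Q} = R_{\tilde Q}K_{\tilde Q}^{-1}$ and $L_{\tilde Q^{-1}} = K_{\tilde Q}R_{\tilde Q^{-1}}$, so that the central block gets conjugated by $K_{\tilde Q}^{-1}(\cdot)K_{\tilde Q}$ while $R_{\tilde Q}$, $R_{\tilde Q^{-1}}$ remain on the outside. After splitting the anticommutator term via $D - A_{\tilde Q^{-1}\tilde Q_x}D^{-1}A_{\tilde Q^{-1}\tilde Q_x} = (D - A_{\tilde Q^{-1}\tilde Q_x})D^{-1}(D + A_{\tilde Q^{-1}\tilde Q_x})$ and inserting $K_{\tilde Q}K_{\tilde Q}^{-1} = \mathrm{Id}$ between consecutive factors, each factor is converted by the conjugation identities above; refolding the anticommutator piece now with respect to $\mathbb{\tilde D}$ (using $(\mathbb{\tilde D}-A)\mathbb{\tilde D}^{-1}(\mathbb{\tilde D}+A) = \mathbb{\tilde D} - A\mathbb{\tilde D}^{-1}A$, a formal identity valid for any invertible $\mathbb{\tilde D}$) yields exactly the claimed expression.

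The only genuinely delicate point is the bookkeeping in this mirror conjugation step: one must verify the sign in $\mathbb{\tilde D} = D - C_{\tilde Q_x\tilde Q^{-1}}$ and the swap $\tilde Q^{-1}\tilde Q_x \mapsto \tilde Q_x\tilde Q^{-1}$ induced by $K_{\tilde Q}^{-1}(\cdot)K_{\tilde Q}$, and keep straight that here $L_{\tilde Q}$ plays the role that $R_Q$ did in the meta case, so that the outer factors emerge as $R_{\tilde Q}$ and $R_{\tilde Q^{-1}}$. Once the conjugation relations are correctly set up, the remainder is a purely formal rearrangement identical in structure to the computation preceding the proposition.
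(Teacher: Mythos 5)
Your proposal is correct and follows essentially the same route as the paper: part a) is read off verbatim from the conjugation computation displayed just before the proposition, and part b) is the mirror analogue of that same conjugation argument, with $K_{\tilde Q}^{-1} D K_{\tilde Q} = D - C_{\tilde Q_x\tilde Q^{-1}}$ playing the role of $K_Q D K_Q^{-1} = \tilde D$. The paper leaves the mirror case implicit (an independent derivation via the B\"acklund link to the alternative mKdV appears in Appendix \ref{section recursion operator mirror}), but your explicit mirror conjugation identities are exactly the intended mechanism and are set up correctly, including the sign and the swap $\tilde Q^{-1}\tilde Q_x \mapsto \tilde Q_x\tilde Q^{-1}$.
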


\begin{rem}
   It is worth mentioning that the strong symmetries one obtains transferring the recursion operator \eqref{amkdv rec} 
   to the meta-mKdV \eqref{meta mkdv} and the mirror meta-mKdV \eqref{mirror meta mkdv} are precisely those 
   given in the above proposition, see Appendix \ref{section recursion operator mirror}.
\end{rem}

Finally, we turn to hereditariness, an important concept introduced in \cite{Fuchssteiner:1979}. Hereditariness of the
noncommutative KdV recursion operator 
\begin{equation*}
   \Phi_{\rm KdV} (U) = D^2 + 2 A_U + A_{U_x} D^{-1} + C_U D^{-1} C_U D^{-1}
\end{equation*}
is the main result of \cite{Schiebold:2011}. Another result from \cite{Fokas:Fuchssteiner:1981} ensures that 
hereditariness is preserved by B\"acklund transformations. Applying first the Miura transformation and 
then the respective Cole-Hopf link yields

\begin{thrm}
   The operators in Theorem \ref{rec op} are hereditary symmetries.
\end{thrm}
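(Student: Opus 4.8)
The plan is to obtain hereditariness \emph{by transport} along the B\"acklund chart, starting from the one place where it is already established and avoiding a direct verification (which, as noted in the introduction, is considerably harder in the noncommutative setting). Concretely, I would invoke the main result of \cite{Schiebold:2011}, that the noncommutative KdV recursion operator $\Phi_{\rm KdV}(U)$ is hereditary, and then propagate this property step by step to the operators of Theorem \ref{rec op} using the preservation result of Fokas and Fuchssteiner \cite{Fokas:Fuchssteiner:1981}. The latter asserts that if $\Phi$ is a hereditary strong symmetry for one equation and a second equation is linked to it by a B\"acklund transformation with transformation operator $\Pi$, then $\Pi\,\Phi\,\Pi^{-1}$ is again a hereditary strong symmetry. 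Thus the whole argument reduces to chaining two such transports, exactly along the arrows $(a)$ and $(d)$ of Figure \ref{fig nc chart}.

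First I would transport along the Miura transformation $U = -(V^2+V_x)$, link $(a)$. Computing the Fr\'echet derivatives of the Miura map produces its transformation operator, and conjugating $\Phi_{\rm KdV}(U)$ by it reproduces the mKdV recursion operator $\Psi(V)$ in \eqref{mkdv rec}; this is precisely the route by which $\Psi$ is obtained in \cite{Carillo:LoSchiavo:Schiebold:2016, Carillo:Schiebold:2009}. By the Fokas--Fuchssteiner result, $\Psi(V)$ is hereditary. Next I would transport along the Cole--Hopf link $B(Q,V)=VQ-Q_x$ towards the meta-mKdV. Here the transformation operator $\Pi=B_Q^{-1}B_V$ and the identity $\Phi(Q)=\Pi\,\Psi(V)\,\Pi^{-1}$ are already in hand from Section \ref{section recursion operator} (via Lemma \ref{identity}). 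Applying the preservation result a second time yields that $\Phi(Q)$ is hereditary, which is part a).

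For part b) the cleanest route reuses the direct link between the two meta-equations. By Proposition \ref{link between metas}, the inversion $\tilde Q=Q^{-1}$ is a B\"acklund transformation between the meta-mKdV and the mirror meta-mKdV; transporting the now-hereditary operator $\Phi(Q)$ along it gives a hereditary strong symmetry for the mirror equation, and one checks, just as in the conjugation computation preceding Proposition \ref{rec op derivation}, that it coincides with $\tilde\Phi(\tilde Q)$. Alternatively, one may run the analogous two-step transport through the alternative mKdV recursion operator \eqref{amkdv rec} and the mirror Cole--Hopf link; both arrive at the same conclusion, giving a useful consistency check.

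The hard part will not be any individual computation but confirming that each arrow really is a B\"acklund transformation in the sense demanded by \cite{Fokas:Fuchssteiner:1981}, so that the preservation theorem genuinely applies. Concretely, one must verify that the transformation operators intertwine the directional derivatives of the respective flows, and that the nonlocal factors $D^{-1}$ together with the multiplication operators $R_Q^{\pm1}$, $L_{\tilde Q}^{\pm1}$ are handled consistently, so that the conjugations $\Pi\,\Psi\,\Pi^{-1}$ are well defined rather than merely formal. Once the hypotheses are checked for the Miura and Cole--Hopf links, hereditariness propagates automatically and the theorem follows.
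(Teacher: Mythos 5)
Your proposal is correct and follows essentially the same route as the paper: hereditariness of the noncommutative KdV recursion operator from \cite{Schiebold:2011} is transported via the Fokas--Fuchssteiner preservation theorem \cite{Fokas:Fuchssteiner:1981}, first along the Miura transformation and then along the respective Cole--Hopf links, using the transformation operators already computed in Section \ref{section recursion operator} and Appendix \ref{section recursion operator mirror}. Your variant for part b) (transporting along the inversion $\tilde Q = Q^{-1}$ of Proposition \ref{link between metas} instead of the mirror Cole--Hopf link) is an inessential deviation, and your closing caveat about verifying the B\"acklund hypotheses matches the paper's own remark in Appendix \ref{section terminology} that these results are used formally, assuming inverses of operators like $D$ and $D+C_V$ exist.
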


\section{Solving the meta-mKdV hierarchy} 
\label{section hierarchy}

Consider the \emph{noncommutative meta mKdV hierarchy},
\begin{equation*} 
   (E_{2n-1}) \hspace*{2cm} Q_{t_{2n-1}} =  \Phi(Q)^{n-1} Q_x ,
\end{equation*}
for $n\geq 1$, where $\Phi(Q)$ is the recursion operator given in Theorem \ref{rec op}. The base member ($n=1$) 
is $Q_{t_1} = Q_x$, and the following argument shows that for $n=2$ the meta-mKdV \eqref{meta mkdv} is obtained:

Abbreviate $V=Q_xQ^{-1}$. As a first step, $(D-C_V) R_{Q^{-1}} Q_x = (D-C_V) V = V_x$. Second, since $D^{-1} A_V V_x 
= D^{-1} \{ V, V_x \} = V^2$, we find $(D-A_{V}D^{-1}A_{V}) V_x = V_{xx} - 2 V^3$. This shows
\begin{eqnarray*}
 \Phi(Q) Q_x &=& R_{Q} D^{-1} (D+C_{V}) (D-A_{V}D^{-1}A_{V}) (D-C_{V})R_{Q^{-1}} Q_x \\
  &=& R_{Q} D^{-1} (D+C_{V}) \Big( V_{xx} - 2  V^3 \Big) \\
  &=& R_{Q} D^{-1} \Big( (V_{xx} - 2  V^3)_x - \big[ V, V_{xx} \big] \Big) \\ 
  &=&  \Big( (V_{xx} - 2  V^3) - \big[ V, V_{x} \big]  \Big) Q .
\end{eqnarray*}
Reinserting $V=Q_xQ^{-1}$ and computing the term in the large brackets explicitly, one obtaines $Q_{xxx}-3Q_{xx} Q^{-1} Q_x$, 
i.e. the right-hand side of \eqref{meta mkdv}.
\begin{rem}
   Observe that this gives a heuristic access to the meta-mKdV \eqref{meta mkdv}.
\end{rem}

The equations $(E_{2n-1})$, $1\leq n\leq N$, are regarded as a system of partial differential equations in the variables 
$t_1, t_3 , \ldots , t_{2N-1}$ (note that we as usual identify $t_1=x$, $t_3=t$). In the literature, it is customary to consider 
$\{ (E_{2n-1}) \}_{n\geq 1}$ as an infinite system for formal functions in infinitely many variables $t_1,t_3,\ldots$. 
In our context we prefer to work with truncated expressions in finitely many variables.

The following theorem generalises Proposition \ref{soliton}.

\begin{thrm} \label{soliton hierarchy}
   Let $A, B\in \mathcal A$ with $A$ invertible. Consider
   \begin{equation*}
      L_N (t_1, \ldots , t_{2N-1}) = \exp \big( \sum_{k=1}^N A^{2k-1}t_{2k-1} \big) B .
    \end{equation*}
   Then, for all $N \in \mathbb{N}$, a solution of the system of the first $N$ equations of the meta-mKdV hierarchy 
   is given by
   \begin{equation*}
      Q_N = (I - L_N)^{-1}A^{-1}(I + L_N)
   \end{equation*}
   on $\Omega = \{ (x,t) \ | \ I \pm L_N \mbox{ \it are invertible} \}$.
\end{thrm}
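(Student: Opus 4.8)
The plan is to transport the companion hierarchy solution of the mKdV across the Cole-Hopf link of Theorem~\ref{link from meta}, exploiting that this link respects the two hierarchies level by level. The companion soliton on the mKdV side is already understood: replacing $L$ by $L_N$ in part~a) of Corollary~\ref{cor soliton} gives $V_N=(I-L_N^2)^{-1}(AL_N+L_NA)$, which by \cite{Carillo:Schiebold:2009} solves the first $N$ members of the mKdV hierarchy. The whole scheme is driven by the linearisation of the flows: since the powers $A^{2k-1}$ mutually commute, $L_N$ satisfies $L_{N,x}=AL_N$ and, more generally, $L_{N,t_{2n-1}}=A^{2n-1}L_N$ for every $n$.

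The first, routine, step is to check that the explicit relations between the solitons persist in all the variables. Since $L_{N,x}=AL_N$ is the same relation used in the base case and the Cole-Hopf relation involves only $x$-derivatives, the computation of Corollary~\ref{cor soliton} goes through verbatim and yields $C(Q_N)=Q_{N,x}Q_N^{-1}=V_N$ identically in $t_1,\dots,t_{2N-1}$; the multi-time analogues of Lemma~\ref{lem1} and Lemma~\ref{lem2} hold for the same reason. For $N=2$ this already reproduces Proposition~\ref{soliton}, so only the higher flows require a new argument.

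It remains to upgrade ``$V_N$ solves the mKdV hierarchy'' to ``$Q_N$ solves the meta-mKdV hierarchy''. Here I would invoke the structure theorem of Fokas and Fuchssteiner, exactly as in Section~\ref{section recursion operator}. That theorem produced the conjugation $\Phi=\Pi\,\Psi\,\Pi^{-1}$ between the mKdV recursion operator $\Psi$ and the meta-mKdV recursion operator $\Phi$ through the transformation operator $\Pi=B_Q^{-1}B_V$ attached to the link $B(Q,V)=VQ-Q_x$. The same intertwining $\Pi\,\Psi=\Phi\,\Pi$ shows that $\Pi$ carries the mKdV symmetry hierarchy onto the meta-mKdV symmetry hierarchy, so that a solution of the first $N$ mKdV flows satisfying $B(Q,V)=0$ gives, through its Cole-Hopf partner, a solution of the first $N$ meta-mKdV flows. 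Applying this to the pair $(Q_N,V_N)$, for which $B(Q_N,V_N)=0$ by the previous step, yields the claim.

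The main obstacle is to make this last transfer rigorous on the level of the concrete time-dependent solutions rather than merely on the level of operators. One must ensure that the Cole-Hopf relation determines $Q_N$ from $V_N$ without spurious integration constants, and that differentiating $B(Q_N,V_N)=0$ along each flow, together with the intertwining identity $\Pi\,\Psi^{n-1}V_{N,x}=\Phi^{n-1}Q_{N,x}$, indeed forces $Q_{N,t_{2n-1}}=\Phi(Q_N)^{n-1}Q_{N,x}$. The explicit computation before the theorem settles this for $n=2$; for $n\ge 3$ it is the linearisation $L_{N,t_{2n-1}}=A^{2n-1}L_N$ that ultimately matches the recursion-operator flow to the genuine $t_{2n-1}$-derivative, and checking this compatibility is the real content of the proof.
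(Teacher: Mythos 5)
Your strategy is the same as the paper's: take $V_N=(I-L_N^2)^{-1}(AL_N+L_NA)$, which solves the first $N$ mKdV flows by \cite[Theorem 12]{Carillo:Schiebold:2009}, verify that $C(Q_N)=V_N$ holds identically in $t_1,\dots,t_{2N-1}$ (the computation from Corollary \ref{cor soliton}), and transfer through the conjugation $\Phi(Q)=\Pi\,\Psi(V)\,\Pi^{-1}$. But the proposal stops exactly where the work begins: you concede that ``checking this compatibility is the real content of the proof'', and the tool you offer for it --- the Fokas--Fuchssteiner structure theorem --- does not act in the direction you need. Both that theorem and the defining property of a B\"acklund transformation (Appendix \ref{section terminology}, condition 1)) presuppose that \emph{both} members of the pair are solutions of their respective equations; neither asserts that if the Cole-Hopf image $V=C(Q)$ solves the mKdV flows, then $Q$ solves the meta-mKdV flows. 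That implication is in fact false in general, for precisely the reason (integration constants) you flag: if $Q$ solves the meta-mKdV hierarchy and $c$ is any invertible, $x$-independent but time-dependent element, then $Qc$ has the same Cole-Hopf image, $C(Qc)=C(Q)$, yet fails the flows unless $c$ is constant. So an argument specific to the pair $(Q_N,V_N)$ is indispensable, and your proposal does not contain it.

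The paper closes this gap with a short computation that your text never performs, and which does not require a further use of the linearisation $L_{N,t_{2n-1}}=A^{2n-1}L_N$ you point to (that information is already spent inside \cite[Theorem 12]{Carillo:Schiebold:2009}). Differentiating the identity $V=Q_xQ^{-1}$ with respect to $t_{2n-1}$ gives
\begin{equation*}
   V_{t_{2n-1}} = Q_{xt_{2n-1}}Q^{-1}-Q_xQ^{-1}Q_{t_{2n-1}}Q^{-1}
   = R_{Q^{-1}}(D-L_V)\,Q_{t_{2n-1}} = \Pi^{-1}Q_{t_{2n-1}},
\end{equation*}
and the same identity with $x$ in place of $t_{2n-1}$ gives $V_x=\Pi^{-1}Q_x$; the decisive point is that $\Pi^{-1}=R_{Q^{-1}}(D-L_V)$ is an honest differential operator, requiring no inversion. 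Inserting $V_{t_{2n-1}}=\Psi(V)^{n-1}V_x$ yields $\Pi^{-1}Q_{t_{2n-1}}=\Psi(V)^{n-1}\Pi^{-1}Q_x$, and applying $\Pi$ together with $\Phi(Q)=\Pi\,\Psi(V)\,\Pi^{-1}$ gives $Q_{t_{2n-1}}=\Phi(Q)^{n-1}Q_x$. Only this final application of $\Pi$ is formal --- it ignores $\ker(D-L_V)$, which consists exactly of the elements $Qc$ above --- and this is the level of rigour the paper explicitly adopts (Remark a) in Appendix \ref{section terminology}). Without this chain-rule step, your proposal is a plan for a proof rather than a proof.
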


The proof heavily relies on \cite[Theorem 12]{Carillo:Schiebold:2009}, where it is shown that
\begin{equation*}
      V_N = (I-L_N^2)^{-1} (AL_N+L_NA)
\end{equation*} 
solves the system of the first $N$ equations of the noncommutative mKdV hierarchy $V_{t_{2n-1}} =\Psi(V)^{n-1} V_x$,
$1\leq n\leq N$, where $\Psi(V)$ is the recursion operator in \eqref{mkdv rec}. Then the Cole-Hopf 
transformation \eqref{cole hopf} is used to transfer this solution to the meta-mKdV 
hierarchy. 

\begin{proof}[Proof of Theorem \ref{soliton hierarchy}] 
      To simplify notation, we suppress the index $N$ in the rest of the proof. 
      Set $V=C(Q)$, where $C$ is given in \eqref{cole hopf}.
      In the proof of Corollary \ref{cor soliton}, it has
      been verified that
      \begin{eqnarray*}
         V = (I-L^2)^{-1} (AL+LA) .
      \end{eqnarray*} 
      By \cite[Theorem 12]{Carillo:Schiebold:2009}, $V$ is a solution of the mKdV hierarchy.
      Furthermore,  
      \begin{eqnarray*}
              V_{t_{2n-1}} &=& \big( Q_xQ^{-1} \big)_{t_{2n-1}} 
         			\ =\  Q_{xt_{2n-1}} Q^{-1} - Q_x Q^{-1} Q_{t_{2n-1}} Q^{-1}  
			 	\ =\  R_{Q^{-1}}(D-L_V) Q_{t_{2n-1}} \\
			       &=& \Pi^{-1} \ Q_{t_{2n-1}}  ,
      \end{eqnarray*}   
      where $\Pi =(D-L_V)^{-1}R_Q$.
      Hence,
      \begin{eqnarray*}
      		 \Pi^{-1} \ Q_{t_{2n-1}} =V_{t_{2n-1}}  = \Psi (V)^{n-1}  V_x  = \Psi (V)^{n-1} \  \Pi^{-1} \ Q_x .
      \end{eqnarray*}
      Recall from the construction of the meta-mKdV recursion operator preceding Theorem \ref{rec op}
      that $\Phi(Q) = \Pi \ \Psi(V) \ \Pi^{-1}$. As a result,
       \begin{eqnarray*}
          Q_{t_{2n-1}} = \Pi \bigm( \Pi^{-1}\ \Phi(Q) \ \Pi \bigm)^{n-1}\  \Pi^{-1}\ Q_x  = \Phi(Q)^{n-1} Q_x ,
      \end{eqnarray*}
      which completes the proof.
\end{proof}

Finally, the solution given in Theorem \ref{soliton hierarchy} is mapped\footnote{
Observe first that the Cole-Hopf transformations \eqref{cole hopf}, \eqref{mirror cole hopf} also map solutions 
of the base member of the meta-mKdV hierarchy, $Q_t =Q_x$, to solutions of the base members $V_t=V_x$ 
and $\tilde V_t = \tilde V_x$ of the mKdV and the alternative mKdV hierarchies. Since the recursion operators 
are hereditary, a result in \cite{Fokas:Fuchssteiner:1981} then ensures that the Cole-Hopf links extend to all levels 
of the hierarchies. 

Note that this also yields an alternative (less direct) proof of Theorem \ref{link from meta}. 
}
to a solution of the alternative mKdV hierarchy, 
\begin{equation*}
   \tilde V_{t_{2n-1}} = \tilde \Psi (\tilde V)^{n-1} \tilde V_x,
\end{equation*}
$1\leq n\leq N$, where $\tilde \Psi(\tilde V)$ is the recursion operator in \eqref{amkdv rec}. This leads to

\begin{cor}  Let the assumptions of Theorem \ref{soliton hierarchy} be satisfied.    
   Then, for all $N \in \mathbb{N}$, a solution of the system of the first $N$ equations of the alternative mKdV hierarchy 
   is given by
   \begin{equation*}
       \tilde V_N = (I + L_N)^{-1} A (I + L_N)^{-1} (AL_N+L_NA) (I - L_N)^{-1} A^{-1} (I + L_N)  
   \end{equation*}
   on $\Omega = \{ (x,t) \ | \ I \pm L_N \mbox{ \it are invertible} \}$.
\end{cor}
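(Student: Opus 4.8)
The plan is to treat this corollary as the hierarchy counterpart of Corollary \ref{cor soliton}, reusing exactly the two ingredients of the base-level argument: the meta-mKdV hierarchy solution of Theorem \ref{soliton hierarchy}, and the mirror Cole-Hopf transformation \eqref{mirror cole hopf}. Concretely, I would set $Q = Q_N = (I-L_N)^{-1}A^{-1}(I+L_N)$, which by Theorem \ref{soliton hierarchy} solves the first $N$ equations of the meta-mKdV hierarchy, and then show that $\tilde V_N = \tilde C(Q_N) = Q_N^{-1}(Q_N)_x$ is precisely the asserted solution of the alternative mKdV hierarchy.

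First I would justify that $\tilde C$ transfers solutions level by level. By Theorem \ref{link from meta}b), $\tilde C$ maps the base member $Q_{t_1}=Q_x$ of the meta-mKdV hierarchy to the base member $\tilde V_{t_1}=\tilde V_x$ of the alternative mKdV hierarchy. Since the recursion operators $\Phi(Q)$ and $\tilde \Psi(\tilde V)$ are hereditary, the Fokas--Fuchssteiner transfer principle invoked in the footnote preceding this corollary guarantees that the Cole-Hopf link intertwines the two hierarchies at every level; hence $\tilde V_N=\tilde C(Q_N)$ solves the first $N$ equations of the alternative mKdV hierarchy. Alternatively, one can argue directly as in the proof of Theorem \ref{soliton hierarchy}: the link $Q\tilde V - Q_x = 0$ has Fr\'echet derivatives $-(D-R_{\tilde V})$ in $Q$ and $L_Q$ in $\tilde V$, yielding a transformation operator $\tilde \Pi = -(D-R_{\tilde V})^{-1}L_Q$ with $\Phi = \tilde \Pi\, \tilde \Psi\, \tilde \Pi^{-1}$, through which the relation $Q_{t_{2n-1}}=\Phi(Q)^{n-1}Q_x$ can be pushed to obtain $\tilde V_{t_{2n-1}}=\tilde \Psi(\tilde V)^{n-1}\tilde V_x$.

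It then remains to compute $\tilde V_N$ explicitly, and this is essentially the calculation already carried out in Corollary \ref{cor soliton} with $L$ replaced by $L_N$. Writing $\tilde Q_N = Q_N^{-1} = (I+L_N)^{-1}A(I-L_N)$ as in \eqref{soliton mirror meta} and using $(Q_N)_x = -\tilde Q_N^{-1}(\tilde Q_N)_x\,\tilde Q_N^{-1}$, one finds $\tilde V_N = Q_N^{-1}(Q_N)_x = -(\tilde Q_N)_x\,\tilde Q_N^{-1}$. The only point to check is that $\partial_x L_N = A L_N$ still holds; this is immediate since $t_1=x$ and all the powers $A^{2k-1}$ commute, so the extra flows contribute nothing to the $x$-derivative and the computation \eqref{eq1} carries over verbatim, giving $(\tilde Q_N)_x = -(I+L_N)^{-1}A(I+L_N)^{-1}(AL_N+L_NA)$. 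Substituting this together with $\tilde Q_N^{-1}=(I-L_N)^{-1}A^{-1}(I+L_N)$ yields exactly the claimed expression for $\tilde V_N$.

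I expect no genuine obstacle here: the only conceptual input is the hierarchy-level extension of the Cole-Hopf link, which is already supplied by the hereditariness theorem and the Fokas--Fuchssteiner result cited earlier, while the explicit algebra is a direct transcription of Corollary \ref{cor soliton}. The one place warranting care is confirming that the derivative identity $\partial_x L_N = A L_N$ (hence \eqref{eq1}) survives the passage from the two-variable exponential to $L_N$, which it does for the reason just given.
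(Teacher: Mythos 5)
Your proposal is correct and follows essentially the same route as the paper: the paper's own justification of this corollary is precisely the footnote's transfer argument (base members map to base members, then hereditariness plus the Fokas--Fuchssteiner result extend the Cole-Hopf link $\tilde V = Q^{-1}Q_x$ to all levels of the hierarchies) combined with the explicit algebra already carried out in the proof of Corollary \ref{cor soliton}, now with $L$ replaced by $L_N$; your alternative direct argument via the transformation operator $\tilde\Pi$ also faithfully mirrors the paper's proof of Theorem \ref{soliton hierarchy}. One small citation slip: the fact that $\tilde C$ maps the translation flow $Q_{t_1}=Q_x$ to $\tilde V_{t_1}=\tilde V_x$ is not given by Theorem \ref{link from meta}b) (which concerns the third-order flow), but is the elementary direct computation asserted without proof in the paper's footnote.
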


\begin{appendix}

\section{Proof of Theorem \ref{link from meta}} 
\label{section proof}

In this appendix we give a direct verification that the Cole-Hopf transformations \eqref{cole hopf} and \eqref{mirror cole hopf}
map solutions of the meta-mKdV \eqref{meta mkdv} to solutions of the mKdV \eqref{mkdv} and the alternative 
mKdV \eqref{amkdv}, respectively.

\begin{proof}[Proof of Theorem \ref{link from meta}] Let $Q$ be a solution of the meta-mKdV \eqref{meta mkdv}. 
   As for (a), we need to verify that  $V=C(Q)=Q_xQ^{-1}$ solves the mKdV \eqref{mkdv}. We 
   compute\footnote{In the subsequent computations only the noncommutative product rule and derivation
   rule for inverses, $(Q^{-1})_x = Q^{-1}Q_xQ^{-1}$, are used.}
   \begin{eqnarray*}
	V_x &=& \big( Q_xQ^{-1}\big)_x \ = \ Q_{xx}Q^{-1} - (Q_xQ^{-1})^2 , \\
      	V_{xx} &=& \big( Q_{xx}Q^{-1} - (Q_xQ^{-1})^2 \big)_x \\
		&=& Q_{xxx}Q^{-1} - 2Q_{xx}Q^{-1}Q_xQ^{-1}
      					- Q_xQ^{-1}Q_{xx}Q^{-1} + 2(Q_xQ^{-1})^3 , \\
      	V_{xxx} &=&  Q_{xxxx}Q^{-1} \\
			 && \qquad - 3 Q_{xxx} Q^{-1} Q_x Q^{-1}
					- 3 Q_{xx} Q^{-1} Q_{xx} Q^{-1}
					-  Q_x Q^{-1} Q_{xxx} Q^{-1} \\
      			&& \qquad	+ 6 Q_{xx} Q^{-1} Q_x Q^{-1} Q_x Q^{-1}
					+ 3 Q_x Q^{-1} Q_{xx} Q^{-1} Q_x Q^{-1}
					+ 3 Q_x Q^{-1} Q_x Q^{-1} Q_{xx}  Q^{-1} \\			
			&& \qquad - 6( Q_x Q^{-1})^4 .
   \end{eqnarray*}
   For the nonlinear term in \eqref{mkdv}  we get
   \begin{eqnarray*}
      \{ V^2,V_x\} &=&   Q_x Q^{-1} Q_x Q^{-1} Q_{xx} Q^{-1}
      					+  Q_{xx} Q^{-1} Q_x Q^{-1} Q_x Q^{-1} - 2( Q_x Q^{-1})^4 ,
   \end{eqnarray*}
   and hence
   \begin{eqnarray*}
      V_{xxx}- 3 \{ V^2,V_x\} 
      	&=&  Q_{xxxx} Q^{-1} \\
		&&  \qquad - 3 Q_{xxx} Q^{-1} Q_x Q^{-1}
					- 3 Q_{xx} Q^{-1} Q_{xx} Q^{-1}
					-  Q_x Q^{-1} Q_{xxx} Q^{-1} \\
      		&&	\qquad+ 3 Q_{xx} Q^{-1} Q_x Q^{-1} Q_x Q^{-1}
					+ 3 Q_x Q^{-1} Q_{xx} Q^{-1} Q_x Q^{-1} .
   \end{eqnarray*}
   On the other hand, we find
   \begin{eqnarray*}
      V_t &=& Q_{xt}Q^{-1} - Q_x Q^{-1}Q_tQ^{-1} \\
      	&=& \big( Q_{xxx}-3Q_{xx}Q^{-1}Q_{x} \big)_x Q^{-1} - Q_xQ^{-1} \big( Q_{xxx}-3Q_{xx}Q^{-1}Q_{x} \big) Q^{-1} ,
   \end{eqnarray*}
   since $Q$ solves \eqref{meta mkdv}, and, consequently,
   \begin{eqnarray*}
	V_t &=& \big( Q_{xxxx} - 3 Q_{xxx}Q^{-1}Q_{x} - 3 Q_{xx}Q^{-1}Q_{xx} + 3 Q_{xx}Q^{-1}Q_xQ^{-1}Q_{x} \big) Q^{-1}  \\
		&&\qquad  -Q_xQ^{-1} \big( Q_{xxx} - 3 Q_{xx}Q^{-1}Q_{x} \big) Q^{-1} \\
	&=&  Q_{xxxx}Q^{-1} - 3 Q_{xxx}Q^{-1}Q_{x}Q^{-1}  - 3 Q_{xx}Q^{-1}Q_{xx}Q^{-1} - Q_xQ^{-1}Q_{xxx}Q^{-1} \\
		&&\qquad  + 3Q_{xx}Q^{-1}Q_xQ^{-1}Q_{x} Q^{-1} + 3Q_xQ^{-1}Q_{xx}Q^{-1}Q_{x} Q^{-1} \\
	&=& V_{xxx}- 3 \{ V^2,V_x\} .
   \end{eqnarray*}
   
   As for (b), one has to check that $\tilde V=\tilde C(Q) = Q^{-1}Q_x$ solves the alternative mKdV \eqref{amkdv}.
   Here we compute
   \begin{eqnarray*}
      \tilde V_x &=& \big(Q^{-1}Q_x\big)_x \ = \ -(Q^{-1}Q_x)^2+Q^{-1}Q_{xx} , \\
      \tilde V_{xx} &=&  2(Q^{-1}Q_x)^3-Q^{-1}Q_{xx}Q^{-1}Q_x-2Q^{-1}Q_xQ^{-1}Q_{xx}+Q^{-1}Q_{xxx} , \\
      \tilde V_{xxx} &=& -6(Q^{-1}Q_x)^4 \\
      			&& \qquad  +3Q^{-1}Q_{xx}Q^{-1}Q_xQ^{-1}Q_x+3Q^{-1}Q_xQ^{-1}Q_{xx}Q^{-1}Q_x+6Q^{-1}Q_xQ^{-1}Q_xQ^{-1}Q_{xx} \\
			&& \qquad -Q^{-1}Q_{xxx}Q^{-1}Q_x-3Q^{-1}Q_{xx}Q^{-1}Q_{xx}-3Q^{-1}Q_xQ^{-1}Q_{xxx} \\
			&& \qquad +Q^{-1}Q_{xxxx} .
   \end{eqnarray*}
   For the nonlinear terms in \eqref{amkdv} we therefore get
    \begin{eqnarray*}
      	[\tilde V,\tilde V_{xx}] &=& Q^{-1} Q_x Q^{-1} Q_{xxx} - Q^{-1} Q_{xxx}Q^{-1} Q_x \\
						&& \quad + Q^{-1} Q_{xx} Q^{-1} Q_xQ^{-1} Q_x
							+ Q^{-1} Q_x Q^{-1} Q_{xx} Q^{-1} Q_x  - 2 Q^{-1} Q_x Q^{-1} Q_x Q^{-1} Q_{xx}  , \\
	-2 \tilde V\tilde V_x\tilde V &=& 2 ( Q^{-1} Q_x)^4 -2   Q^{-1} Q_x Q^{-1} Q_{xx}  Q^{-1} Q_x .
   \end{eqnarray*}
  This shows
   \begin{eqnarray*}
      \tilde V_{xxx} +3 [\tilde V,\tilde V_{xx}]  -6 \tilde V\tilde V_x\tilde V
      			&=&	6 Q^{-1} Q_{xx} Q^{-1} Q_x Q^{-1} Q_x \\
			&& \qquad  -4 Q^{-1} Q_{xxx} Q^{-1} Q_x-3 Q^{-1} Q_{xx} Q^{-1} Q_{xx} 
			 											 + Q^{-1} Q_{xxxx}  .
   \end{eqnarray*}
   On the other hand,
   \begin{eqnarray*}
      \tilde V_t &=&  \big( Q^{-1} Q_x\big)_t \ = \  - Q^{-1} Q_t Q^{-1} Q_x + Q^{-1} Q_{xt} \\
      	&=& -  Q^{-1} \big(  Q_{xxx}-3 Q_{xx} Q^{-1} Q_x \big)  Q^{-1} Q_x
			+ Q^{-1} \big( Q_{xxx}-3 Q_{xx} Q^{-1} Q_x \big)_x 
   \end{eqnarray*}
   since $Q$ solves \eqref{mkdv}. Hence
   \begin{eqnarray*}
	\tilde V_t &=&  Q^{-1}  Q_{xxxx} - 4  Q^{-1}  Q_{xxx}  Q^{-1} Q_x \\
		&&  \qquad  - 3  Q^{-1}  Q_{xx} Q^{-1} Q_{xx} 
			+ 6  Q^{-1}  Q_{xx} Q^{-1} Q_x Q^{-1} Q_x \\
		&=& \tilde V_{xxx} +3 [\tilde V,\tilde V_{xx}]  -6 \tilde V\tilde V_x\tilde V .
    \end{eqnarray*}
    This completes the proof.
\end{proof}

\section{Background on symmetries and B\"acklund transformations} 
\label{section terminology}

In this appendix, we will concisely review some general concepts for evolution equations, 
which are used throughout the article.
The essence is to view an evolution equation as an ordinary differential equation with values 
in a suitable function space and  to apply methods from differential topology and dynamical systems.
We refer to \cite{Marsden:Ratiu:1999} for an introduction to the general theory, and 
to \cite{Schiebold:2011} and the references therein for more details on applications relevant 
for the present article.

Our phase space will be a topological algebra $\mathcal{F}$ of functions depending on $x\in\mathbb{R}$ 
and with values in some Banach algebra $\mathcal{A}$. 
An evolution equation can be viewed as an ordinary differential equation
\begin{equation}\label{vector field K}
U_t=K(U),
\end{equation}
where the unknown function $U=U(t)$ takes values in $\mathcal{F}$ and $K$ is a vector field 
on $\mathcal{F}$, i.e.~a mapping that associates to $U\in\mathcal{F}$ a vector $K(U)$ in the 
tangent space $T_U \mathcal{F}$ of $\mathcal{F}$ at $U$. 
In the sequel, all vector fields are assumed to be smooth.
The Lie bracket $[K,L]_{\rm Lie}$ of two vector fields should not be confused with their commutator 
$[K,L]$, defined pointwise by $[K,L](U)=[K(U),L(U)]$, where we identify $T_U \mathcal{F}$ with 
$\mathcal{F}$ and use the commutator in $\mathcal{A}$.

A vector field $L$ is a \emph{symmetry of \eqref{vector field K}} if $[K,L]_{\rm Lie}\equiv 0$.
By an operator we mean a mapping $\Phi$ that maps $U\in\mathcal{F}$ to a bounded linear operator 
$\Phi(U):T_U \mathcal{F}\rightarrow T_U \mathcal{F}$
on the tangent space at $U$. 
Such a $\Phi$ is a \emph{strong symmetry  of \eqref{vector field K}} (or recursion operator) 
if $\Phi$ maps symmetries to symmetries.
As proved in \cite{Fuchssteiner:1979}, $\Phi$ is a strong symmetry  of \eqref{vector field K} if
\begin{equation*}
	\Phi'[K] V=K'[\Phi V]-\Phi K'[V]
\end{equation*}
is valid for every vector field $V$.
An operator $\Phi$ is \emph{hereditary} if 
\begin{equation*}
	\Phi\Phi'[V]W-\Phi'[\Phi V]W
\end{equation*}
is symmetric in the vector fields $V$ and $W$.
Note that hereditariness does not depend on the underlying evolution equation.

If $\Phi$ is a strong symmetry, so also the powers $\Phi^n$, $n\in\mathbb{N}$,
and we get a hierarchy of vector fields $K_n=\Phi^n K$ which all commute with $K$.
For $\Phi$ hereditary, it was proved in \cite{Fuchssteiner:1991} that these vector fields \emph{all}
commute pairwise. 
In all cases considered in the present article, the hierarchy can be extended taking as a base member 
$K_0(U)=U_x$, the symmetry expressing translation invariance.

Let $\mathcal{F}$, $\mathcal{G}$ and $\mathcal{H}$ be
function spaces as above. 
In addition to \eqref{vector field K}, consider a second evolution equation
\begin{equation}\label{vector field L}
V_t=L(V),
\end{equation}
for $V=V(t)$ with values in $\mathcal{G}$.
Following \cite{Fokas:Fuchssteiner:1981}, a smooth mapping
\begin{equation*}
B:\mathcal{F}\times \mathcal{G}\rightarrow \mathcal{H}
\end{equation*}
defines a \emph{B\"acklund transformation} if 
\begin{itemize}
   \item[1)] for any solutions $U(t)$, $V(t)$ of \eqref{vector field K}, \eqref{vector field L}, respectively,
		which are defined on an open interval $I$ and satisfy $B(U(t_0),V(t_0))=0$ for some $t_0\in I$, 
		we have $B(U(t),V(t))=0$ for all $t\in I$.
\end{itemize}
Such a B\"acklund transformation is called \emph{locally invertible}, if
\begin{itemize}
   \item[2)] at every point $(U,V)\in \mathcal{F}\times \mathcal{G}$, the partial derivatives $B_U$ and $B_V$ 
		restrict to invertible linear mappings $T_{U}\mathcal{F}\times\{0\}\rightarrow T_{B(U,V)}\mathcal{H}$ 
		and $\{0\}\times T_{V}\mathcal{G}\rightarrow T_{B(U,V)}\mathcal{H}$, respectively. 
		Here we use the obvious identification $T_{(U,V)}\cong T_{U}\mathcal{F}\times T_{V}\mathcal{G}
		\cong\mathcal{F}\times\mathcal{G}$,
\end{itemize}
Let $\Phi$ be a strong symmetry of \eqref{vector field K} and let $B$ define a locally invertible B\"acklund 
transformation. 
Near $(U_0,V_0)$ with $B(U_0,V_0)=0$, we can solve $B(U,V)=0$ for $V$ by $V=V(U)$ with
differential $dV=\Pi(U,V(U))$, $\Pi = - B_V^{-1}B_U$.
Transforming $\Phi$ in the natural way gives an operator $\Psi$ defined near $V_0$ by
\begin{equation*}
	\Psi(V) =\Pi(U(V),V)\ \Phi(U(V))\ \big(\Pi(U(V),V)\big)^{-1} .
\end{equation*}
Fokas and Fuchssteiner proved in \cite{Fokas:Fuchssteiner:1981} that $\Psi$ is a strong symmetry 
of \eqref{vector field L}. 
Furthermore, if $\Phi$ in addition is hereditary, $\Psi$ is hereditary, too.
Finally, the correspondence generated by a B\"acklund transformation extends to hierarchies\footnote{
Actually, the arguments in \cite{Fokas:Fuchssteiner:1981} are given in the classical setting ($\mathcal{A}=\mathbb{C}$). 
However, it is easily seen that everything remains valid for evolution equations with noncommuting dependent variables.
}. 

\begin{rem}
a) In Section \ref{section recursion operator} and \ref{section hierarchy}, we make formal use of these results 
in a mild sense. Instead of requiring local invertibility, we compute under the assumption that inverses of operators 
like $D$ and $D+C_V$ exist.

b) Our short introduction to B\"acklund transformations is adapted to our needs. For a broader view on the topic 
we refer to the monographs \cite{Gu:Hu:Zhou:2005, Rogers:Schief:2002}.
\end{rem}

\section{Derivation of the mirror meta-mKdV recursion operator} 
\label{section recursion operator mirror}

In this appendix the proof of Theorem \ref{rec op} b) is provided, and an alternative derivation 
of the meta-mKdV recursion operator is given. We start with the following lemma.

\begin{lem} \label{more identities}
   \begin{itemize}
      \item[a)] For $T= \mp S_x S^{-1}$, the identity $(D\pm L_T)^{-1} R_S = R_S (D\pm C_T)^{-1}$ holds.
      \item[b)] For $T= \mp S^{-1} S_x$, the identity $(D\pm R_T)^{-1} L_S = L_S (D\mp C_T)^{-1}$ holds.
  \end{itemize}
\end{lem}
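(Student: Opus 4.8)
The plan is to prove both identities by the same strategy used in Lemma~\ref{identity}: establish the intertwining relation $(D\pm L_T)R_S = R_S(D\pm C_T)$ for part a) (and the corresponding relation with $L_S$ for part b)) on the level of the \emph{differential} operators, and then invert. The key algebraic input is the product rule for $D$ together with the commutativity of left and right multiplications, i.e.\ $L_U R_W = R_W L_U$ for all $U,W$.

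For part a), I would first compute, using $DR_S = R_S D + R_{S_x}$,
\begin{equation*}
   (D\pm L_T)R_S = R_S D + R_{S_x} \pm L_T R_S = R_S D + R_{S_x} \pm R_S L_T.
\end{equation*}
The goal is to match this to $R_S(D\pm C_T) = R_S D \pm R_S L_T \mp R_S R_T$, so I need the correction term $R_{S_x}$ to equal $\mp R_S R_T = \mp R_{TS}$. This is exactly where the hypothesis $T = \mp S_x S^{-1}$ enters: it gives $S_x = \mp TS$, hence $R_{S_x} = R_{\mp TS} = \mp R_{TS} = \mp R_S R_T$, as required. Thus $(D\pm L_T)R_S = R_S(D\pm C_T)$, and multiplying by the appropriate inverses on both sides yields the claimed identity. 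For part b) the computation is mirror-symmetric: using $DL_S = L_S D + L_{S_x}$ and $R_T L_S = L_S R_T$, one finds $(D\pm R_T)L_S = L_S D + L_{S_x} \pm L_S R_T$, and the hypothesis $T = \mp S^{-1}S_x$ gives $S_x = \mp ST$, so $L_{S_x} = \mp L_{ST} = \mp L_S L_T$. Since $\pm C_T = \pm L_T \mp R_T$, this matches $L_S(D \mp C_T) = L_S D \mp L_S L_T \pm L_S R_T$, completing the identity after inversion.

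The computation is routine once the sign bookkeeping is set up, so I do not expect a genuine obstacle; the only subtlety is keeping track of the coupled $\pm/\mp$ signs and recognising which multiplication operator ($L$ or $R$) the product rule for $D$ produces in each case. The cleanest way to avoid errors is to handle the two signs simultaneously via the $\pm$ notation and to translate the hypothesis on $T$ into the identity $R_{S_x}=\mp R_S R_T$ (resp.\ $L_{S_x}=\mp L_S L_T$) as the very first step, so that everything afterwards is a direct substitution. Note also that both statements specialise Lemma~\ref{identity}: taking the upper sign in a) with $S=Q$ and $T=-Q_xQ^{-1}=-V$ reproduces $(D-L_V)^{-1}R_Q = R_Q(D-C_V)^{-1}$, which provides a useful consistency check on the signs.
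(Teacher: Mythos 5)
Your proof is correct, and it follows essentially the same route as the paper: derive the intertwining relations $(D\pm L_T)R_S = R_S(D\pm C_T)$ and $(D\pm R_T)L_S = L_S(D\mp C_T)$ from the product rule, the commutativity of left and right multiplications, and the translation of the hypothesis on $T$ into $R_{S_x}=\mp R_S R_T$ (resp.\ $L_{S_x}=\mp L_S L_T$), then invert. The only cosmetic difference is that the paper dispatches part a) by citing Lemma \ref{identity} (whose proof is exactly your computation for one sign choice) and writes out only part b), whereas you carry both signs explicitly and add a useful consistency check.
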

\begin{proof} 
   a) is Lemma \ref{identity}. b) Since $DL_S = L_SD+L_{S_x}$, 
   and since left- and right multiplication commute, $(D\pm R_T)L_S = L_S(D\pm R_T) + L_{S_x}$.
   Using $T= \mp S^{-1} S_x$, we can replace $L_{S_x}$ by $L_{\mp ST}= \mp L_SL_T$, and the lemma follows.
\end{proof}

\begin{proof}[Proof of Theorem \ref{rec op} b)]
Consider the Cole-Hopf link
\begin{eqnarray*}
   B(\tilde Q,V) &=& \tilde Q V + \tilde Q_x 
\end{eqnarray*}
between the mKdV \eqref{mkdv} and the mirror meta-mKdV \eqref{mirror meta mkdv}, see 
Proposition \ref{link from mirror meta}. Computing the Fr\'echet derivatives
\begin{eqnarray*}
	B_{\tilde Q} [\hat Q] =  \hat Q V + \hat Q_x , \qquad  B_V [\hat V] = \tilde Q \hat V ,
\end{eqnarray*}
we get $\tilde \Phi(\tilde Q) = \Pi \ \Psi(V) \Pi^{-1} $ with $\Pi = B_{\tilde Q}^{-1} B_V  
=  (D+R_V)^{-1} L_{\tilde Q} = L_{\tilde Q} (D-C_V)^{-1}$, according to Lemma \ref{more identities}. Hence, 
\begin{eqnarray*}
   \tilde \Phi(\tilde Q) 
      &=& L_{\tilde Q} (D-C_V)^{-1} \ (D - C_VD^{-1}C_V) (D - A_VD^{-1}A_V) \  (D-C_V) L_{\tilde Q^{-1}} \\
      &=& L_{\tilde Q} D^{-1} (D + C_V) (D - A_VD^{-1}A_V)  (D-C_V)  L_{\tilde Q^{-1}}  
\end{eqnarray*}
with $V=\tilde Q^{-1} \tilde Q_x$.
\end{proof}

As a supplement we give an alternative derivation of the meta-mKdV recursion operator 
starting with the B\"acklund transformation
\begin{eqnarray*}
     B(Q,\tilde V) &=& Q\tilde V - Q_x ,
\end{eqnarray*}
which links the meta-mKdV \eqref{meta mkdv} to the alternative mKdV \eqref{amkdv}, see Theorem \ref{link from meta}. 
In this case the Fr\'echet derivatives are
\begin{eqnarray*}
	B_Q [\hat Q] \ = \ \hat Q\tilde V - \hat Q_x , \qquad  B_{\tilde V} [\hat V]  \ = \ Q \hat V ,
\end{eqnarray*}
yielding the transformation operator $\Pi = B_{Q}^{-1} B_{\tilde V} = - (D-R_{\tilde V})^{-1} L_Q = - L_Q (D+C_{\tilde V})^{-1}$, 
by Lemma \ref{more identities}.
Hence a recursion operator for the meta-mKdV follows from
\begin{eqnarray*}
   \Phi(Q) &=&\Pi \ \tilde \Psi(\tilde V) \Pi^{-1} \\
      &=& L_Q \tilde D^{-1} \cdot
      		({\tilde D}+C_{\tilde V}) (\tilde D-A_{\tilde V}) \tilde D^{-1} (\tilde D+A_{\tilde V})(\tilde D-C_{\tilde V}) \tilde D^{-1} 
		\cdot  \tilde D L_{Q^{-1}} \\
      &=& L_Q 
      		\tilde D^{-1}  ({\tilde D}+C_{\tilde V}) (\tilde D-A_{\tilde V}\tilde D^{-1} A_{\tilde V})(\tilde D-C_{\tilde V})  
		L_{Q^{-1}} ,
\end{eqnarray*}
where $\tilde D = D+C_{\tilde V}$. Note that this is the representation of the meta-mKdV recursion operator given in 
Proposition \ref{rec op derivation}.

An analogous derivation of the mirror meta-mKdV recursion operator from the alternative amKdV \eqref{amkdv} using 
Proposition \ref{link from mirror meta} gives
\begin{equation*}
   \tilde \Phi (\tilde Q) = R_{\tilde Q} {\tilde D}^{-1} ({\tilde D}+C_{\tilde V})
								({\tilde D}-A_{\tilde V}{\tilde D}^{-1}A_{\tilde V}) 
								           ({\tilde D}-C_{\tilde V}) R_{\tilde Q^{-1}} ,
\end{equation*}
compare again Proposition \ref{rec op derivation}.

\end{appendix}

\quad

\end{document}